\newcommand\cyr{%
	\renewcommand\rmdefault{wncyr}%
	\renewcommand\sfdefault{wncyss}%
	\renewcommand\encodingdefault{OT2}%
	\normalfont
	\selectfont}
\DeclareTextFontCommand{\textcyr}{\cyr} 
\def\bP {\mathbf{P}}
\def\cD {\mathcal{D}}
\def\cM {\mathcal{M}}
\def\de {{\delta}}
\def\eps {{\epsilon}}
\def\ka {{\kappa}}
\def\l {{\lambda}}
\def\si {{\sigma}}
\def\pa {{\partial}}
\def\grad {{\nabla}}
\def\indc {{\bf 1}}
\def\deb {{\rightharpoonup}}
\def\be {\begin{equation}}
\def\ee {\end{equation}}
\newcommand{\ba}{\begin{aligned}}
	\newcommand{\ea}{\end{aligned}}
\newcommand{\Div}{\operatorname{div}}
\newcommand{\Supp}{\operatorname{supp}}
\newcommand{\Lip}{\operatorname{Lip}}
\newtheorem{thm}{Theorem}
\newtheorem{lem}[thm]{Lemma}
\newtheorem{prop}[thm]{Proposition}
\newtheorem{remark}{Remark}
\def\var{\varepsilon}
\def\R{{\Bbb R}}
\def\E{{\Bbb E}}
\def\N{\Bbb N}
\def\blackbox{\unskip\kern 6pt\penalty 500%
\raise -1pt\hbox{\vrule\vbox to 8pt{\hrule width 6pt\vfill\hrule}\vrule}}
\begin{document}

\title[slowing particles]{A model for slowing particles in random media}

\author{Fran\c{c}ois Golse$^{1}$\,\,\orcidlink{0000-0002-7715-6682}}
\address{$^1$École polytechnique \& IP Paris\\ CMLS\\ 91128 Palaiseau Cedex, France}
\email{francois.golse@polytechnique.edu}

\author{Valeria Ricci$^{2,\ast}$\,\,\orcidlink{0000-0002-3902-905X}}
\address{$^2$Universit\`a degli Studi di Palermo\\ 
	Dipartimento di Matematica e Informatica\\
	Via Archirafi 34\\
	90123 Palermo, Italy
	}
\email{valeria.ricci@unipa.it}

\author{Ana Jacinta Soares$^{3}$\,\,\orcidlink{0000-0003-4771-9859}}
\address{$^3$Universidade do Minho\\
	Centro de Matem\'atica\\
	Campus de Gualtar\\
	4710-057 Braga, Portugal\\
}
\email{ajsoares@math.uminho.pt}

\address{$^{\ast}$Corresponding author}

\thanks{\textbf{Acknowledgments}
V.\,Ricci acknowledges the support of the Gruppo Nazionale per la Fisica Matematica-Istituto Nazionale di Alta Matematica (GNFM -\newline INdAM) and of  the University of Palermo, Fondi Finalizzati alla Ricerca di Ateneo FFR2021 Valeria Ricci, FFR2023, FFR2018-2020.\\ A.J.\,Soares thanks the support of the Portuguese FCT (Funda\c c\~ao para a Ci\^encia e a Tecnologia) Projects UIDB/00013/2020 and UIDP/00013/2020 of CMAT-UM (Centro de Matem\'atica da Universidade do Minho).\\
F. Golse and A.J. Soares thank the support of the Bilateral Project Pessoa 
PT-FCT-Ref.2021.09255.CBM and FR-PHC-Ref.47871WF. }

\begin{abstract}
	
	We present a simple model in dimension $d\geq 2$ for slowing particles in random media, where point particles move in straight lines among and inside spherical identical obstacles with Poisson distributed centres. When crossing an obstacle, a particle is slowed down according to the law $\dot{V}= -\frac{\ka}{\eps} S(|V|) V$, where $V$ is the velocity of the point particle, $\ka$ is a positive constant, $\eps$ is the radius of the obstacle and $S(|V|)$ is a given slowing profile. With this choice, the slowing rate in the obstacles is such that the variation of speed at each crossing is of order $1$. We study the asymptotic limit of the particle system when $\eps$ vanishes and the mean free path of the point particles stays finite. We prove the convergence of the point particles density measure to the solution of a kinetic-like equation with a collision term which includes a contribution proportional to a $\de$ function in $v=0$; this contribution guarantees the conservation of mass for the limit equation. 

\end{abstract}

\maketitle

\noindent
{\textbf{Key-words}}: particle systems, kinetic limit, slowing particles, point measure component. 

\noindent
\textbf{MSC}: 35B40, 35Q49, 35Q70, 82C22, 82C40, 82D10, 82D30.

%%%%%%%%%%%%%%%%%%%%%%%%%%%%%%%%%%%%%%%%%%%%%%%
\section{Introduction} \label{intro}
%%%%%%%%%%%%%%%%%%%%%%%%%%%%%%%%%%%%%%%%%%%%%%%

In this article, we present a simple particle model describing the slowing down of point (\textit{light}) particles in a random inhomogeneous medium, where the inhomogeneities (or \textit{inclusions}) consist in spherical obstacles with Poisson distributed centres. In this model, the light particles move freely in the space 
free from obstacles and slow down when crossing the obstacles. We study the asymptotic behaviour of the particle system when the radius $\eps$ of the obstacles vanishes and the slowing rate grows to infinity in such a way that the loss of kinetic energy of the particles at each crossing is of order $1$. Choosing 
the density of the Poisson centres in such a way that the mean free path is finite, we derive a kinetic type equation for the probability measure of the light particles in the phase space where the conservation of mass is guaranteed by a term proportional to a Dirac delta measure in $v=0$.

One motivation for studying this type of dynamics comes from the modelling of Inertial Confinement Fusion (ICF) targets. While a detailed description of the intricate processes at work in ICF targets \cite{Duderstadt, AtzeniMeyer} exceeds 
by far the scope of the present study, the following features are of special interest to the problem addressed in the present paper. 

Some ICF experiments use pure deuterium (D) targets as a mean of measuring the fuel areal density, known to be a key parameter in ICF research \cite{Azechi,Cable}. In these targets, tritium (T) ions are created by the
thermonuclear reaction 
\[
D+D\to T+\text{ proton }+4.03MeV\,.
\]
These T ions are created with a very large kinetic energy (about 1 MeV, see Section 2.1.1 in \cite{Duderstadt} or Section 1.3.1 in \cite{AtzeniMeyer}), and therefore are referred to as \textit{suprathermal}. 
The T ions travel in the plasma and lose kinetic energy, to 
the point that they combine with the D ions, producing $\alpha$ particles, neutrons and energy according to the thermonuclear reaction
\[
D+T\to\alpha+\text{ neutron }+17.58MeV\,.
\]
The cross-section of the D-T reaction is much larger than that of the D-D reaction, and the D-T reaction is the main source of energy in ICF experiments. Thus, the yield of ICF implosions will be reduced by any obstruction 
to the D-T reaction.

One significant such obstruction comes from the ICF implosion itself. The fuel is highly compressed by ablation of the outer shells of the fusion pellet (see Section 3.8 in \cite{Duderstadt}). This process initiates a 
Richtmyer-Meshkov instability (see Section 3.6 in \cite{Duderstadt}, Section 8.1.3 in \cite{AtzeniMeyer} and \cite{Richt}) at the interface between the fuel and the various layers of materials around it. As a result, inclusions of dense, inert material with a high 
stopping power are driven in the fuel. T ions stopping in these inclusions are prevented from reacting with the surrounding D ions, with the negative effect on the yield of the thermonuclear burning of the fuel in the target 
already mentioned above.

Returning to the mathematical problem sketched at the beginning of this section, the point particles in the context of ICF experiments would be the T ions, and the ``light'' material would be the thermonuclear fuel, while
the inclusion of ``dense'' material with high stopping power would be the debris of the various outer shells surrounding the fusion pellet before implosion driven into the fuel by the hydrodynamic instabilities.

The specific problem mentioned above has been addressed in part in earlier mathematical publications. For instance Levermore and Zimmerman have obtained a formula for the fraction of T ions lost to the D-T reaction 
(formula (22) in \cite{LevermoreZimmerman}). Their argument is based on an earlier model of particle transport in a very special kind of random media proposed by Levermore, Pomraning, Sanzo and Wong 
\cite{LevermorePomraningSanzo}. In such media, a particle in medium $A$ (resp. $B$)  at the position $s$ has a probability of finding itself in medium $B$ (resp. $A$) at the position $s+ds$ equal to $ds/\lambda_A$ (resp. 
$ds/\lambda_B$), where the lengths $\lambda_A$ and $\lambda_B$ are constant. The same problem is considered by Clouet, Puel, Sentis and the first author in \cite{ClouetGolsePuelSentis} with a completely different
random media, corresponding to the setting chosen in the present paper --- spherical inclusions centred at realizations of a Poisson point process. The analysis in that reference leads to another formula for the
probability of stopping in the inclusions (see formula (13) in \cite{ClouetGolsePuelSentis}). Numerical simulations lead to comparisons between the results obtained in  \cite{LevermoreZimmerman} and \cite{ClouetGolsePuelSentis}
concerning the fraction of T ions lost to the D-T reaction.

The present paper differs from either \cite{LevermoreZimmerman} or \cite{ClouetGolsePuelSentis} in several ways. The random medium considered here involves a Poisson point process as in \cite{ClouetGolsePuelSentis}, but
corresponds to a different asymptotic regime. Specifically, the situation considered here involves 

\begin{enumerate}[label=(\roman*)]
	\item very small, identical spherical inclusions,
	
\item  very high stopping power in each spherical inclusion.
\end{enumerate}

We shall see that the effective equation for the expected velocity distribution function of point particles (T ions), equation (\ref{eqlim2}) in Section \ref{matdef},
is a kinetic model including a collision integral acting 
on the particles speed only, and not on their direction as  in most classical kinetic models, together with a Lagrange multiplier, equation (\ref{moltlag}) in Section \ref{matdef}, proportional to the Dirac mass at zero velocity, which 
maintains the local conservation of mass.

The outline of the paper is the following:  in Section \ref{matdef} we present the particle model, fix the notation and state the main theorem; in Section \ref{apriori} we recall some a priori estimates needed to derive some useful properties of the limit points of the probability measure of light particles; in Section \ref{Equalim} we derive the limit equation for the probability measure of light particles and prove the main theorem.

%%%%%%%%%%%%%%%%%%%%%%%%%%%%%%%%%%%%%%%%%%%%%%%
\section{The particle model and the main result}
\label{matdef}
%%%%%%%%%%%%%%%%%%%%%%%%%%%%%%%%%%%%%%%%%%%%%%%

In this section we shall define the particle model and state the main theorem.

\bigskip

In what follows, we denote 
by $$B(c,\eps)=\{x\in\R^d: |x-c|<\eps\}$$ the open ball centred in $c$ and having radius $\eps$ (we shall add a superscript denoting the dimension when needed, so that $B^d(c,\eps)$ will denote the open ball centred in $c$ and having radius $\eps$ in dimension $d$) and by $B^d=|B^d(0,1)|$ the volume of the $d$-dimensional unit ball.

Moreover, given a locally compact Hausdorff space $X$, we denote by $\cM(X)$ the space of Radon measures, by $\cM^1(X)$ the space of finite mass signed Radon measures and by $\cM^1_+(X)$ the space of positive finite mass Radon measures on $X$.

In this paper we shall use the following notions of convergence in Radon measures spaces. 

Denoting as $C_c(X)$
and $C_b(X)$ resp. the set of continuous functions with compact support
and the set of continuous bounded functions on $X$, we say that a sequence of measures on $X$, $\{\mu_n\}$, converges to $\mu$ in the 

\begin{itemize}
	\item {\bf weak-*} (of $\cM(X)$) or {\bf{vague topology}} when, for all $\phi \in C_c(X)$,  $\int_{X} \phi(x)\mu_n(dx)\to\int_{X} \phi(x)\mu(dx)$,  and we write:
	$$
	\mu_n \stackrel{*}{\deb} \mu \quad \mathrm{as}\quad n\to\infty;
	$$
	
	\item {{\bf weak topology}}  (of  $\cM^1(X)$)  when, for all $\phi \in C_b(X)$, \newline $\int_{X} \phi(x)\mu_n(dx)\to\int_{X} \phi(x)\mu(dx)$,  and we write:
	$$
	\mu_n \deb \mu \quad \mathrm{as}\quad n\to\infty.
	$$
\end{itemize} 

As a last remark on the notation, if $m(dxdy)$ is a positive Radon measure with finite mass on $\R^p\times\R^q$, we denote by
\[
\int_{\R^q}mdy
\]
the push-forward of $m$ by the first projection $\R^p\times\R^q\ni(x,y)\mapsto x\in\R^p$, i.e. the Radon measure defined on $\R^p$ by the linear functional
\[
\psi\mapsto\int_{\R^p\times\R^q}\psi(x)m(dxdy)\,.
\]
We use a similar notation for the push-forward of $m$ by the second projection $\R^p\times\R^q\ni(x,y)\mapsto y\in\R^q$.

\bigskip

We study then the asymptotics of the following particle system.

\smallskip

We consider a particle with initial position $x\in\R^d$ and velocity $v\in\R^d$, $d\geq 2$, and configurations of obstacles (or inclusions) of radius $\epsilon$ with Poisson distributed centres $C=(c_1,c_2\ldots)$, where $c_i\in\R^d$.

The probability to find $n$ centres in the set  $A\subset\R^d$ is then given by:
$$
\bP\left(\{C\text{ s.t. }\#(A\cap C)=n\}\right)=\frac{(\l_{\eps}|A|)^n}{n!}e^{-\l_{\eps}|A|},
$$
with $\l_{\eps}=\frac{\l}{\eps^{d-1}}$ and $\l=O(1)$ ($|A|$ denotes the Lebesgue measure of the set $A$). Hence the particle mean free path is finite uniformly as $\eps\to 0^+ $, since the mean free length to the first collision is proportional to $(\l_\eps \eps^{d-1})^{-1}$ and the particle is slowed down inside the obstacles.

We shall denote by $\E$ the expectation under $\bP$  and by $$Z_\eps[C]:=\R^d\setminus\bigcup_{c\in C}\overline{B(c,\eps)}$$ the complement in $\R^d$ of the set occupied by the obstacles.

\bigskip

The equation of motion of the particle for a given configuration of obstacles $C$ is given by (see also \cite{LevermoreZimmerman,ClouetGolsePuelSentis}):

\be
	\label{moto}
	\left\{
	\begin{array}{l}
		\dot X=V,\\
		\\
		\dot V=-\frac{\ka}\eps\indc_{Z^c_\eps[C]}(x)S(|V|)V=-\ka_\eps(x;C)S(|V|)V,\\
		\\
		(X,V)\big|_{t=0}=(x,v),
	\end{array}
	\right.
\ee
where  $S\in \Lip((0,+\infty))$, satisfying $S\geq S_0>0$, is the slowing profile ($\Lip((0,+\infty))$ denotes the space of Lipschitz functions on $(0,+\infty)$).

Calling $(X^\eps_t,V^\eps_t)(x,v;C)$ the solution to the differential system (\ref{moto}), the evolution of an initial profile both on the space of test functions  and on the space of Borel measures is defined through the following prescriptions:
\[
\label{defevotest}
\phi_\eps(t,x,v;C):=\phi((X^\eps_t,V^\eps_t)(x,v;C)),
\]
and
\be
\label{defevomis}
\int f_\eps(t,dxdv;C)\phi(x,v):=\int f^0_\eps(dxdv)\phi_\eps(t,x,v;C),
\ee
for 
each $\phi\in C_b (\R^d\times\R^d)$ and for each positive Borel measure with finite mass $f_\eps(0,dxdv;C)=f^0_\eps(dxdv)$.

The time-dependent measure $f_\eps$ is the weak solution 
to the Cauchy problem for the advection equation
\be
\label{evomisura}
(\pa_t+v\cdot\grad_x)f_\eps(t,dxdv;C)=\ka_\eps(x;C)\Div_v(S(|v|)vf_\eps(t,dxdv;C))
\ee
 with initial datum 
 $f_\eps(0,dxdv;C)=f^0_\eps(dxdv)$.
 
 Observe that $f_\eps$ is continuous in $t$ with values in the set of positive measures with finite mass on $\R^d\times\R^d$ equipped with the weak topology (see Lemma 
 8.1.6 in \cite{Ambrosio}).

 We are interested in studying the behaviour in the limit $\eps \to 0$ of $\E[f_\eps(t,dxdv;C)]$, i.e. in studying the Boltzmann--Grad limit for this specific particle system.
 
 \bigskip
 
 In what follows we shall prove the following theorem:
 
 \bigskip
 
\begin{thm}
\label{TeorPrinc}
Assume $f_0\in C(\R^d\times\R^d)$, with compact support and $f_0\geq 0$, 
and let $f_\eps (t,x,v;C)$ be the solution to equation (\ref{evomisura}) with initial datum $f_0$.		
 	Let  
 	$
 	a(z)=\int_{0}^{z} du \frac{1}{S(u)}
 	$, with $S\in \Lip((0,+\infty))$, satisfying $S\geq S_0>0$.
 	\smallskip
 	
 	Then 
	$\E[f_\eps(t,dxdv;C)]dt\deb F(t,dxdv)dt$ where $F(t,dxdv)$ is the unique solution in $\cD'((0,+\infty)\times\R^d\times\R^d)$ to the equation	
	\begin{eqnarray} 	
	\pa_t F + v\cdot\grad_x F  \!\!&=&\!\! \si \bigg[ \int_{0}^{1}dh^{d-1} \frac{|z|^d}{|v|^{d-1}}\frac{S(|z|)}{S(|v|)}F(t,x,z) \indc_{|v|>0} 
	\nonumber
	\\[2mm]
	&-& |v| F(t,x,v) \bigg] dxdv  +\l_F(t,x)\de_0(dv)dx
	\label{eqlim2}
 	\end{eqnarray}
 	with $|z|= a^{-1}(a(|v|)+2 \ka\sqrt{1-h^2})$, $z=|z|\hat{v}$, $\si=\lambda B^{d-1}$
 	and
 	\be
 	\label{moltlag}
 	\l_F(t,x)=\int_{\R^d} dz \int_{0}^{1} dh^{d-1}|z| F(t,x, z)\indc_{|z|\leq a^{-1}(2\ka\sqrt{1-h^2})} .
 	\ee
	
 \end{thm}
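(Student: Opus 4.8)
The overall strategy is to pass to the limit $\eps\to0$ in the weak (distributional) formulation of the advection equation \eqref{evomisura}, after taking the expectation $\E$ over the Poisson point process. The two key ingredients are: first, a priori compactness for the family $\E[f_\eps(t,dxdv;C)]$ so that (subsequential) limit points $F$ exist in an appropriate topology; and second, an explicit computation of the limiting collision operator using the exponential structure of the Poisson process together with the effective dynamics \eqref{moto}. Before doing anything else, I would record the key elementary fact about the single-obstacle dynamics: if the particle enters a ball of radius $\eps$ along a chord whose half-length (in units of $\eps$) is $\sqrt{1-h^2}$ (here $h\in[0,1]$ is the normalized impact parameter), then during the crossing $|V|$ evolves by $\dot{|V|}=-\tfrac{\ka}{\eps}S(|V|)|V|$ while the direction $\hat V$ is unchanged, so $a(|V|)$ decreases at rate $\tfrac{\ka}{\eps}|V|$ along the trajectory; integrating over the chord of length $2\eps\sqrt{1-h^2}/|V|$ (in time) gives that the incoming speed $|v|$ and outgoing speed $|v'|$ are related by $a(|v'|)=a(|v|)-2\ka\sqrt{1-h^2}$, which is exactly the inverse of the relation $|z|=a^{-1}(a(|v|)+2\ka\sqrt{1-h^2})$ appearing in \eqref{eqlim2}. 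If $a(|v|)<2\ka\sqrt{1-h^2}$, the particle stops inside the obstacle (its speed reaches $0$ in finite crossing-time, since $a(0)=0$), which is the mechanism producing the Dirac mass at $v=0$.

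Next I would set up the weak formulation: test \eqref{evomisura} against $\phi(x,v)\psi(t)$ with $\phi\in C^1_c$, $\psi\in C^1_c((0,\infty))$, use \eqref{defevomis} to rewrite everything in terms of the backward/forward characteristic flow acting on the fixed initial datum $f_0$, and take $\E$. The right-hand side becomes an expectation over $C$ of a sum of contributions localized near each obstacle the trajectory meets in time interval of order $1$; because $\l_\eps=\l/\eps^{d-1}$ and each obstacle has cross-section $\sim\eps^{d-1}B^{d-1}$, the expected number of crossings in a bounded time is $O(1)$, and standard Poisson/Boltzmann–Grad bookkeeping (conditioning on the first obstacle, using independence of the remainder, and the fact that the free-flight length to the first obstacle converges to an exponential random variable with rate $\si|v|=\l B^{d-1}|v|$) reduces the computation to a single-obstacle term integrated against the Poisson intensity. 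The gain term, after the change of variables $v\mapsto z$ dictated by the crossing relation above, produces the Jacobian factor $\tfrac{|z|^d}{|v|^{d-1}}\tfrac{S(|z|)}{S(|v|)}$ (the $|z|^{d-1}/|v|^{d-1}$ from the $d$-dimensional speed change at fixed direction, one extra $|z|$ from the flux/time factor, and $S(|z|)/S(|v|)$ from differentiating the implicit crossing relation); the loss term gives $-|v|F$; and the part of the incoming measure that gets stopped (the indicator $\indc_{|z|\le a^{-1}(2\ka\sqrt{1-h^2})}$) is precisely the mass deposited at $v=0$, i.e. $\l_F(t,x)\de_0(dv)$, which makes the total collision operator mass-conservative in $v$ for each $(t,x)$.

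For the a priori/compactness part I would invoke the estimates of Section \ref{apriori}: $f_\eps$ is a positive measure of total mass $\le\|f_0\|_{L^1}$ uniformly in $\eps$ and $C$, and since speeds only decrease along the dynamics, $\E[f_\eps(t)]$ is supported in $\{|v|\le\|v\|_{L^\infty(\mathrm{supp}\,f_0)}\}$ and has $x$-support growing at most linearly in $t$; this gives tightness, hence weak-$*$ precompactness of $\E[f_\eps(t,dxdv;C)]\,dt$ in $\cM((0,\infty)\times\R^d\times\R^d)$, and one checks the mass does not escape so the limit is in $\cM^1_+$ and the convergence upgrades to the weak topology. Any limit point $F$ then satisfies \eqref{eqlim2} by the computation above; uniqueness of the solution in $\cD'$ follows because \eqref{eqlim2} is a linear transport equation with a bounded, mass-preserving collision kernel (the characteristic speed is just $v$, and the kernel maps higher speeds to lower speeds with a controlled, lower-triangular structure in $|v|$, so a Gronwall/Duhamel argument on $\|F\|_{\mathcal M^1}$ gives uniqueness), and this forces the whole family to converge, not just subsequences.

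**Main obstacle.** The delicate step is the Boltzmann–Grad bookkeeping that rigorously justifies replacing the full random dynamics by a single effective crossing while controlling recollisions and the passage through obstacles at small $h$ (grazing crossings, where $\sqrt{1-h^2}$ is small and the geometry degenerates); one must show these contribute negligibly and that the Poisson statistics genuinely produce the exponential free path with rate $\si|v|$ in the limit, uniformly enough to pass to the limit in the nonlinear-in-$v$ change of variables $v\mapsto z$. Equally, one must handle carefully the threshold set $a(|v|)=2\ka\sqrt{1-h^2}$ where the gain and the stopping contribution separate, to confirm that the limiting split is exactly \eqref{eqlim2}–\eqref{moltlag} with no loss of mass at the interface.
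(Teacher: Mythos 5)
Your plan follows the paper's route essentially step for step: a priori support/mass bounds giving weak compactness, a Gallavotti-style expansion of $\E[f_\eps]$ over Poisson configurations using the single-crossing relation $a(|z|)=a(|v|)+2\ka\sqrt{1-h^2}$ and the polar Jacobian (the extra factor $|z|$ in the gain term being the pre-collisional collision frequency), the stopped particles as the source of the singular term, and a uniqueness argument to upgrade subsequential to full convergence. One sub-step, however, is asserted where a genuine argument is required: you claim the defect of the equation at $v=0$ ``is precisely the mass deposited at $v=0$, i.e.\ $\l_F(t,x)\de_0(dv)$''. A priori, a distribution supported on $\{v=0\}$ is a locally finite sum $\sum A_\alpha\otimes\partial_v^\alpha\de_0$, and you must rule out the derivatives of $\de_0$. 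The paper does this in Lemma \ref{L-MassCons2}: testing against products $\phi(t,x)\psi(v)$, both $(\partial_t+v\cdot\nabla_x)F$ (via the continuity equation of Proposition \ref{conslimp}) and $\ell F-m_F$ act as Radon measures in $v$, i.e.\ distributions of order $0$, so only $A_0$ survives, and then $A_0=\int\ell F\,dv-\int m_F\,dv$, which is exactly (\ref{moltlag}). Without this (or an equivalent direct estimate of $\E[f_\eps\indc_{|v|\le\delta}]$) the precise form of the singular term is not justified.

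On uniqueness you take a genuinely different route: a Duhamel/Gronwall estimate on the total variation of the difference of two measure solutions, versus the paper's duality argument (Lemmas \ref{unic}--\ref{lemunic}), which constructs by a semigroup fixed point solutions of the backward adjoint problem with arbitrary $C_0$ terminal data and concludes $F(T,\cdot)=0$. Your route is viable but requires justifying that a distributional measure solution obeys the Duhamel representation; the duality route sidesteps this at the cost of a regularity discussion for the adjoint test functions. Two minor inaccuracies: the speed does not reach $0$ in finite time inside an obstacle (it decays exponentially, the particle stopping only asymptotically after a finite arc length $<\eps a(|v|)/\ka$), and since the direction $\hat v$ never changes there are no recollisions to control --- the only pathological configurations are overlapping obstacles on the ray, handled in Lemma \ref{sovrapp} with an $O(\eps)$ bound; also grazing crossings correspond to $h$ near $1$, not to small $h$.
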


 \section{A priori estimates and considerations about limit points} \label{apriori}
 
 Before analysing in detail the evolution induced on the density measure by (\ref{moto}), we shall recall some a priori estimates which can be directly derived from the definitions (\ref{moto}), (\ref{defevomis})  and the evolution equation (\ref{evomisura}) implied by them.

   The a priori estimates worked out in this section will help us to obtain some useful properties of the limit points of $\E[f_\eps(t,x,v;C)]$ when $\eps\to 0^+$ in both these topologies.

   %%%%%%%%%%%%%%%%%%%%%%%%%%%%%%%%%%%%%%%%%%
   \begin{lem}
   	\label{cor}
   	Assume that $f_\eps^0(\cdot;C)$ is supported in $\overline{B(0,R)}\times\overline{B(0,R)}$ for some $R>0$ and all configuration $C$ of inclusions. Then, for all $t\ge 0$,
   	\be
   		\label{supptxv}
   	\mathrm{supp}(f_\eps(t,\cdot;C))\subset\overline{B(0,R+Rt)}\times\overline{B(0,R)}\,.
   	\ee
   \end{lem}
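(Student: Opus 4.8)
The plan is to exploit the explicit characteristics of the ODE system \eqref{moto} together with the transport representation \eqref{defevomis}. First I would observe that, for a fixed configuration $C$, the velocity field governing $\dot V$ in \eqref{moto} is dissipative: taking the inner product of the $\dot V$ equation with $V$ gives $\tfrac{d}{dt}|V^\eps_t|^2 = -2\ka_\eps(X^\eps_t;C)S(|V^\eps_t|)|V^\eps_t|^2 \le 0$, since $\ka_\eps \ge 0$ and $S \ge S_0 > 0$. Hence $|V^\eps_t(x,v;C)| \le |v|$ for all $t \ge 0$. This immediately shows that the velocity support cannot grow: if $(x,v) \in \overline{B(0,R)}\times\overline{B(0,R)}$, then $|V^\eps_t(x,v;C)| \le R$ for all $t$, which is the second factor in \eqref{supptxv}.

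Next I would control the position. Since $\dot X^\eps_t = V^\eps_t$ and $|V^\eps_t| \le |v| \le R$, we have
\[
|X^\eps_t(x,v;C)| \le |x| + \int_0^t |V^\eps_s(x,v;C)|\,ds \le R + Rt,
\]
so the characteristics starting in $\overline{B(0,R)}\times\overline{B(0,R)}$ stay in $\overline{B(0,R+Rt)}\times\overline{B(0,R)}$ for all $t\ge 0$. Thus the map $(x,v)\mapsto (X^\eps_t,V^\eps_t)(x,v;C)$ sends the support of $f^0_\eps(\cdot;C)$ into this set.

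Finally I would translate this into a statement about the measure $f_\eps$ via the defining identity \eqref{defevomis}: for any $\phi \in C_b(\R^d\times\R^d)$ supported outside $\overline{B(0,R+Rt)}\times\overline{B(0,R)}$, we have $\phi_\eps(t,x,v;C) = \phi\big((X^\eps_t,V^\eps_t)(x,v;C)\big) = 0$ for every $(x,v)$ in $\mathrm{supp}(f^0_\eps(\cdot;C)) \subset \overline{B(0,R)}\times\overline{B(0,R)}$, hence $\int f_\eps(t,dxdv;C)\phi(x,v) = \int f^0_\eps(dxdv)\,\phi_\eps(t,x,v;C) = 0$. Since this holds for all such test functions, $\mathrm{supp}(f_\eps(t,\cdot;C)) \subset \overline{B(0,R+Rt)}\times\overline{B(0,R)}$, which is \eqref{supptxv}.

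The only delicate point is making the argument rigorous at the level of the measure solution rather than along classical trajectories: one should check that \eqref{moto} is well posed for a.e.\ configuration $C$ (the vector field is Lipschitz in $v$ away from $v=0$ and bounded, with $v=0$ an equilibrium, so trajectories are global and unique), and that the weak solution of \eqref{evomisura} is genuinely the push-forward \eqref{defevomis} — but this is exactly the content already invoked from Lemma 8.1.6 of \cite{Ambrosio} in the paragraph preceding the lemma, so nothing further is needed. I do not expect any substantive obstacle here; the estimate is a routine consequence of the finite speed of propagation built into the transport structure and the energy-dissipativity of the slowing term.
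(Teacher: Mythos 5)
Your argument is correct and follows essentially the same route as the paper's proof: monotone decay of $|V^\eps_t|$ along trajectories, the bound $|X^\eps_t|\le |x|+t|v|$, and the push-forward identity (\ref{defevomis}) to transfer the trajectory estimate to the support of the measure. The only cosmetic difference is that you differentiate $|V|^2$ rather than $|V|$, which changes nothing.
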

   
   \begin{proof}
   	Let $\phi\in C_b(\R^d\times\R^d)$ satisfy $\phi=0$ on $\overline{B(0,R+RT)}\times\overline{B(0,R)}$ for some $T>0$. Since
   	\[
   	\tfrac{d}{dt}|V^\eps_t(x,v;C)|=-\kappa_\eps(X^\eps_t(x,v;C),C)S(|V^\eps_t(x,v;C)|)|V^\eps_t(x,v;C)|\le 0\,,
   	\]
   	\[
   	\tfrac{d}{dt}|X^\eps_t(x,v;C)|\le|V^\eps_t(x,v;C)|\le |v|\;.
   	\]
   	Therefore $\phi((X^\eps_t,V^\eps_t)(x,v;C))=0$ for all $(x,v)\in\overline{B(0,R)}\times\overline{B(0,R)}$ and for all $t\in[0,T]$, so that
   	\begin{align*}
        \int_{\R^d\times\R^d} & \phi(x,v) f_\eps(t,dxdv;C)&
   		\\
   		&=\int_{\overline{B(0,R)}\times\overline{B(0,R)}}\phi((X^\eps_t,V^\eps_t)(x,v;C))f_\eps^0(dxdv;C)
		=0\,.
   	\end{align*}   	Therefore $\mathrm{supp}(f_\eps(t,\cdot;C))\subset\overline{B(0,R+RT)}\times\overline{B(0,R)}$ for all $t\in[0,T)$, which implies the inclusion in the lemma.
   \end{proof}

   \begin{prop}
   	\label{apriori1}
   	Assume that $f_\eps^0(\cdot;C)$ is supported in $\overline{B(0,R)}\times\overline{B(0,R)}$ for some $R>0$ and all configuration $C$ of inclusions. Then $f_\eps$
   	satisfies
   	
   	\noindent
   	\begin{enumerate}[label=(\alph*)]
   		\item \label{cont} the continuity equation
   		\[
   		\partial_t\int_{\R^d}f_\eps(t,\cdot;C)dv+\nabla_x\cdot\int_{\R^d}f_\eps(t,\cdot;C)vdv=0\quad\text{ in }\mathcal D'((0,+\infty)\times\R^d)\,;
   		\]
   		\item \label{mass}
   		the global conservation of mass
   		\[
   		\frac{d}{dt}\int_{\R^d\times\R^d}f_\eps(t,dxdv;C)=0\quad\text{ in }\mathcal D'((0,+\infty))\,;
   		\]
   		\item \label{mom}
   		the momentum balance
   		\[
   		\begin{aligned}
   			\frac{d}{dt}\int_{\R^d\times\R^d}vf_\eps(t,dxdv;C)=-\int_{\R^d\times\R^d}\kappa_\eps(x,C)S(|v|)vf_\eps(t,dxdv;C)
   			\\
   			\text{ in }\mathcal D'((0,+\infty))&\,;
   		\end{aligned}
   		\]
   		\item \label{funz}
   		the mass flux inequality
   		\[
   		\begin{aligned}
   			\frac{d}{dt}\int_{\R^d\times\R^d}\phi(|v|^2)f_\eps(t,dxdv;C)\le 0\quad\text{ in }\mathcal D'((0,+\infty))\,,
   		\end{aligned}
   		\]
   		provided that $\phi\in C^1([0,+\infty))$ is nondecreasing.
   	\end{enumerate}
   	
   \end{prop}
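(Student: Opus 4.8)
The plan is to derive each of the four assertions directly from the weak formulation of the advection equation~(\ref{evomisura}), or equivalently from the representation~(\ref{defevomis}) via the characteristic flow, using Lemma~\ref{cor} to justify that all integrals are over a fixed compact set in phase space and hence finite. Throughout, $C$ is fixed, so this is a deterministic computation; no expectation or randomness enters.

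\smallskip

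First I would establish~\ref{cont}. Take a test function of the form $\psi(t,x)$ with $\psi\in C^\infty_c((0,+\infty)\times\R^d)$, and regard it as a function of $(t,x,v)$ that is independent of $v$. Plugging $\psi$ into the weak form of~(\ref{evomisura}), the term $\Div_v(S(|v|)vf_\eps)$ integrates against $\grad_v\psi=0$ and disappears, leaving exactly the statement that $\int_{\R^d}f_\eps\,dv$ and $\int_{\R^d}vf_\eps\,dv$ satisfy the continuity equation in $\mathcal D'((0,+\infty)\times\R^d)$; here $\int_{\R^d}f_\eps\,dv$ and $\int_{\R^d}vf_\eps\,dv$ denote the push-forwards of $f_\eps$ and of $vf_\eps$ under $(x,v)\mapsto x$, which are well-defined finite measures by Lemma~\ref{cor}. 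Assertion~\ref{mass} then follows either by integrating~\ref{cont} in $x$ (the flux term is a spatial divergence, hence integrates to zero since the support in $x$ stays compact on compact time intervals), or more directly by using $\phi\equiv 1$ in~(\ref{defevomis}): the right side is $\int f^0_\eps(dxdv)$, independent of $t$. For~\ref{mom}, I would use $\phi(x,v)=v_i$ (cut off in a neighbourhood of the support, which is harmless by Lemma~\ref{cor}) in~(\ref{defevomis}), differentiate in $t$, and use the ODE $\dot V^\eps_t=-\ka_\eps(X^\eps_t;C)S(|V^\eps_t|)V^\eps_t$ from~(\ref{moto}); pushing the derivative inside and re-expressing as an integral against $f_\eps(t,\cdot;C)$ gives the momentum balance. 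Equivalently one tests~(\ref{evomisura}) with $\psi(x)v_i$ and integrates by parts in $v$, noting $\Div_v(S(|v|)v\,v_i)\cdot$-type bookkeeping yields $-S(|v|)v_i$ after the pairing.

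\smallskip

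For~\ref{funz}, the cleanest route is again the characteristic representation. Along a trajectory, $\tfrac{d}{dt}|V^\eps_t|^2 = 2V^\eps_t\cdot\dot V^\eps_t = -2\ka_\eps(X^\eps_t;C)S(|V^\eps_t|)|V^\eps_t|^2 \le 0$ since $\ka_\eps\ge 0$ and $S\ge S_0>0$. Hence for $\phi\in C^1([0,+\infty))$ nondecreasing, $t\mapsto\phi(|V^\eps_t(x,v;C)|^2)$ is nonincreasing for every $(x,v)$. Since $f^0_\eps(\cdot;C)\ge 0$, formula~(\ref{defevomis}) with test function $\phi(|v|^2)$ (again suitably truncated outside the compact support guaranteed by Lemma~\ref{cor}) expresses $\int\phi(|v|^2)f_\eps(t,dxdv;C)$ as the integral of a pointwise nonincreasing function against a fixed nonnegative measure, which is therefore nonincreasing in $t$; this is the claimed inequality in $\mathcal D'((0,+\infty))$. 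One should note $\phi(|v|^2)$ need not be compactly supported, but multiplying by a cutoff equal to $1$ on $\overline{B(0,R)}$ in $v$ changes nothing by~(\ref{supptxv}).

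\smallskip

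I do not expect any serious obstacle here; the statement is essentially a catalogue of structural identities that fall out of the equation of motion. The only points requiring a little care are (i) the finiteness and continuity-in-$t$ of all the measures involved, which is handled by Lemma~\ref{cor} together with the already-cited fact (Lemma~8.1.6 in~\cite{Ambrosio}) that $f_\eps$ is weakly continuous in $t$, and (ii) the use of test functions such as $v_i$ and $\phi(|v|^2)$ that are not compactly supported — this is legitimate because, on any compact time interval $[0,T]$, the measures $f_\eps(t,\cdot;C)$ are all supported in the fixed compact set $\overline{B(0,R+RT)}\times\overline{B(0,R)}$, so these functions may be replaced by compactly supported ones without affecting any of the pairings. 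With these two observations in place, each of~\ref{cont}--\ref{funz} is a one- or two-line computation.
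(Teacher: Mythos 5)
Your proposal is correct and follows essentially the same route as the paper: all four assertions are read off from the weak formulation of (\ref{evomisura}) (equivalently the push-forward representation (\ref{defevomis})), with Lemma \ref{cor} justifying the use of test functions such as $v_i$ and $\phi(|v|^2)$ that are not compactly supported. The only substantive variation is in part \ref{funz}, where you argue by monotonicity of $t\mapsto\phi(|V^\eps_t|^2)$ along characteristics instead of deriving, as the paper does, the exact dissipation identity $\frac{d}{dt}\int\phi(|v|^2)f_\eps=-2\int\kappa_\eps S(|v|)|v|^2\phi'(|v|^2)f_\eps$ and concluding from $\phi'\ge 0$; both arguments are valid.
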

   
   \begin{proof}
   	Since $f_\eps(t,\cdot;C)$ is compactly supported by Lemma \ref{cor}, both sides of the advection equation (\ref{evomisura}) for $f_\eps$ are compactly supported distributions for $t\in(0,T)$, and can
   	therefore be evaluated on $C^\infty$ functions, not necessarily with compact support. Choosing the test function of the form $(t,x,v)\mapsto\psi(t,x)$ with $\psi$ of class $C^\infty$
   	on $(0,+\infty)\times\R^d$ with support in $(1/n,n)\times\R^d$ for $n\ge 1$ shows that
   	\[
   	-\int_{(0,+\infty)\times\R^d\times\R^d}(\partial_t+v\cdot\nabla_x)\psi(t,x)f_\eps(t,dxdv;C)dt=0\,.
   	\]
   	The validity of this equality for all such test function $\psi$ and all $n\ge 1$ is exactly the formulation of \ref{cont} in the sense of distributions on $(0,+\infty)\times\R^d$. The same
   	argument with $\psi$ independent of $x$ implies \ref{mass}.
   	
   	If we pick a test vector field of the form $(t,x,v)\mapsto\chi(t)v$ where $\chi\in C^\infty_c((0,+\infty))$, we find that
   	\begin{align*}
   	\int_{(0,+\infty)\times\R^d\times\R^d} & \chi'(t)vf_\eps(t,dxdv;C)dt
   		\nonumber \\[2mm]
   		& = \! \int_{(0,+\infty)\times\R^d\times\R^d}\chi(t)\kappa_\eps(x,C)S(|v|)vf_\eps(t,dxdv;C)dt\,, 
   	\end{align*}
   	which is the formulation of \ref{mom} in the sense of distributions.
   	
   	Finally, let us choose a test vector field of the form $(t,x,v)\mapsto\chi(t)\phi(|v|^2)$ where $\chi\in C^\infty_c((0,+\infty))$ (since the left-hand side of the advection equation satisfied 
   	by $f_\eps$ is a compactly support distribution of order $1$ when $t$ is restricted to a bounded interval of $(0,+\infty)$, one can use a test function whose dependence in $v$ is of
   	class $C^1$ only: see Remark 8.1.1 in \cite{Ambrosio}). Arguing as above leads to the equality
   	\begin{align*}
   		& \int_{(0,+\infty)\times\R^d\times\R^d} \chi'(t)\phi(|v|^2)f_\eps(t,dxdv;C)dt
   		\\
   		& =2\int_{(0,+\infty)\times\R^d\times\R^d}\chi(t)\kappa_\eps(x,C)S(|v|)|v|^2\phi'(|v|^2)f_\eps(t,dxdv;C)dt\,.
   	\end{align*}
   	Hence
   	\begin{align*}
   		\frac{d}{dt} & \int_{\R^d\times\R^d}\phi(|v|^2)f_\eps(t,dxdv;C)
   		\\
   		& =-2\int_{\R^d\times\R^d}\kappa_\eps(x,C)S(|v|)|v|^2\phi'(|v|^2)f_\eps(t,dxdv;C)dt\,,
   	\end{align*}
   	in the sense of distributions on $(0,+\infty)$. Since $\phi$ is nondecreasing and of class $C^1$ on $[0,+\infty)$, its derivative $\phi'$ is nonnegative, which implies \ref{funz}.
   \end{proof}
   
 %%%%%%%%%%%%%%%%%%%%%%%%%%%%%%%%%%%%%%%

\begin{remark}
	\label{suppcomp}
	Since, as a consequence of Lemma \ref{cor}, $\Supp(\E[f_\eps(t,\cdot;C)])\subset\overline{B(0,R+Rt)}\times\overline{B(0,R)}$, whenever $f_\eps^0(\cdot;C)\in \cM^1_+(\R^d\times\R^d)$ each vague limit point of the Radon measure  $\E[f_\eps(t,\cdot;C)]$ on $\R^d\times\R^d$ is also a weak limit point. The same remark is true for the Radon measure  $\E[f_\eps(t,\dot;C)]dt$ on $[0,T]\times \R^d\times\R^d$.
\end{remark}

We are now ready to establish some properties of the limit points of 
$\E[f_\eps(t,\cdot;C)]$ in the weak-* (and therefore, because of the previous Remark \ref{suppcomp}, weak) topology.

%%%%%%%%%%%%%%%%%%%%%%%%%%%%%%%%%%%%%%%%%
\begin{prop}
	\label{conslimp}
	Assume that $f_\eps^0(\cdot;C)$ is supported in $\overline{B(0,R)}\times\overline{B(0,R)}$ for some $R>0$ and all configuration $C$ of inclusions, and that 
	\[
	\sup_{\eps>0}\E\int_{\R^d\times\R^d}f_\eps^0(dxdv;C)<+\infty\,.
	\]
	Then the family of positive measures $\E[f_\eps(t,\cdot;C)]$ on $\R^d\times\R^d$ is weakly relatively compact for each $t\ge 0$ and $\E[f_\eps]$ 
	is weakly relatively compact in the set of positive measures with finite total mass on $[0,T]\times\R^d\times\R^d$ for each $T>0$. Any of its limit points $F$
	as $\eps\to 0^+$ satisfies
	\be 
	\label{limitec}
	\partial_t\int_{\R^d}F(t,\cdot)dv+\nabla_x\cdot\int_{\R^d} F(t,\cdot)v dv=0
	\ee
	in $\mathcal D'((0,+\infty)\times\R^d)$, and
	\be
	\label{limitem}
	\frac{d}{dt}\int_{\R^d\times\R^d}F(t,dxdv)=0
	\ee
	in $\mathcal D'((0,+\infty))$.
	
\end{prop}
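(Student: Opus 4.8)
The plan is to get the compactness from the uniform support bound of Lemma~\ref{cor} together with the conservation of mass of Proposition~\ref{apriori1}, and then to pass to the limit in the distributional identities recorded in Proposition~\ref{apriori1}.

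\textbf{Step 1 (weak relative compactness).} First I would observe that, by Lemma~\ref{cor} and the linearity of $\E$, the measure $\E[f_\eps(t,\cdot;C)]$ is, for every $t\ge 0$, a positive Radon measure supported in the fixed compact set $K_t:=\overline{B(0,R+Rt)}\times\overline{B(0,R)}$, which does not depend on $\eps$. By Proposition~\ref{apriori1}\ref{mass} and Fubini, its total mass is
\[
\int_{\R^d\times\R^d}\E[f_\eps(t,dxdv;C)]=\E\int_{\R^d\times\R^d}f^0_\eps(dxdv;C)\le\sup_{\eps>0}\E\int_{\R^d\times\R^d}f^0_\eps(dxdv;C)=:M<+\infty\,,
\]
uniformly in $\eps$ and in $t$. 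A family of positive measures which is norm-bounded in $\cM^1(\R^d\times\R^d)$ is vaguely (i.e.\ weakly-$*$) relatively compact by the Banach--Alaoglu theorem, and since it is supported in a fixed compact set, Remark~\ref{suppcomp} shows that every vague limit point is a weak limit point; hence the family is weakly relatively compact, which is the first assertion. For the space-time statement, fix $T>0$: the measures $\E[f_\eps]dt$ on $[0,T]\times\R^d\times\R^d$ are positive, supported in the fixed compact set $[0,T]\times K_T$, and of total mass $\le TM$, so the same argument (again using Remark~\ref{suppcomp}) gives weak relative compactness.

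\textbf{Step 2 (limit equations).} Let $F$ be a limit point, say $\E[f_{\eps_n}]dt\deb F(t,dxdv)dt$ on $[0,T]\times\R^d\times\R^d$ for every $T>0$ along some sequence $\eps_n\to 0^+$ (diagonal extraction in $T$). Since the measures $\E[f_{\eps_n}]dt$ are supported in the closed set $[0,T]\times K_T$, so is $F$; in particular $F(t,dxdv)$ gives no mass to $\{|v|>R\}$. Taking the expectation (Fubini) of the identity in Proposition~\ref{apriori1}\ref{cont}, one has, for every $\psi\in C^\infty_c((0,+\infty)\times\R^d)$,
\[
\int_{(0,+\infty)\times\R^d\times\R^d}(\partial_t+v\cdot\nabla_x)\psi(t,x)\,\E[f_{\eps_n}(t,dxdv;C)]\,dt=0\,.
\]
Now pick $\chi_0\in C_c(\R^d)$ with $\chi_0\equiv1$ on $\overline{B(0,R)}$. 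Since every measure appearing here (and $F$) charges only $\{|v|\le R\}$, the integrand may be replaced by $(\partial_t+v\cdot\nabla_x)\psi(t,x)\,\chi_0(v)$, which belongs to $C_c((0,+\infty)\times\R^d\times\R^d)$, without changing any of the integrals. Passing to the limit via $\E[f_{\eps_n}]dt\deb F\,dt$ therefore yields
\[
\int_{(0,+\infty)\times\R^d\times\R^d}(\partial_t+v\cdot\nabla_x)\psi(t,x)\,F(t,dxdv)\,dt=0\,,
\]
which is exactly~(\ref{limitec}). Choosing $\psi$ independent of $x$, i.e.\ $\psi(t,x)=\chi(t)$ with $\chi\in C^\infty_c((0,+\infty))$ (equivalently, starting from Proposition~\ref{apriori1}\ref{mass} in expectation), the same argument gives $\int_{(0,+\infty)\times\R^d\times\R^d}\chi'(t)\,F(t,dxdv)\,dt=0$, that is~(\ref{limitem}).

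\textbf{Main obstacle.} None of the steps above is deep; the one point that requires care is that the natural test functions for the kinetic continuity equation grow linearly in $v$, so they cannot be paired directly against a merely vaguely convergent sequence of measures. This is precisely what the uniform compact support of Lemma~\ref{cor} (already recorded in Remark~\ref{suppcomp}) is for: it allows truncating the test functions to compact support in $v$, and hence using the weak (equivalently, here, vague) convergence of $\E[f_{\eps_n}]dt$ to $F\,dt$, and it is also what makes a vague limit point automatically a weak one. The interchange of $\E$ with integration against test functions is justified by Fubini, all measures in play having finite mass.
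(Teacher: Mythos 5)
Your proposal is correct and follows essentially the same route as the paper: uniform mass bound from conservation of mass, weak-$*$ compactness of a bounded family of positive measures, upgrade to weak compactness via the uniform compact support of Lemma~\ref{cor}, and passage to the limit in the identities of Proposition~\ref{apriori1}. The only cosmetic difference is that you handle the linear growth of the test function in $v$ by truncating it with a cutoff $\chi_0$, whereas the paper instead notes that $|v|\E[f_{\eps_n}]\to|v|F$ weakly on the common compact support; these are equivalent devices.
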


\begin{proof}
	
	By construction, $t\mapsto\E[f_\eps(t,\cdot;C)]$ is a family of continuous functions on $[0,+\infty)$ in the set of positive measures on $\R^d\times\R^d$ 
	equipped with the weak topology, and Proposition \ref{apriori1} \ref{mass} shows that
	\[
	\sup_{\eps>0}\E\int_{\R^d\times\R^d}f_\eps(t,dxdv;C)=\sup_{\eps>0}\E\int_{\R^d\times\R^d}f_\eps^0(dxdv;C)<+\infty\,.
	\]
	The weak-* relative compactness of $\E[f_\eps]$ on $[0,T]\times\R^d\times\R^d$ and  of $\E[f_\eps(t,\cdot;C)]$ on $\R^d\times\R^d$ 
	for each $t\ge 0$ follow from statement (13.4.2) in \cite{Dieudonne2}. Since the weak-* closures of both these families are metrizable, any element of these closures is a weak-*
	limit of a subsequence of $\E[f_\eps(t,\cdot;C)]$ for fixed $t>0$, or of $\E[f_\eps]$ as $\eps\to 0$. Finally, Lemma \ref{cor} implies that
	\[
	\mathrm{supp}(\E[f_\eps(t,\cdot;C)]\subset\overline{B(0,R+Rt)}\times\overline{B(0,R)}
	\]
	for all $\eps>0$ and all $t\in[0,T]$, so that weakly-* converging subsequences of $\E[f_\eps(t,\cdot;C)]$ for fixed $t>0$, or of $\E[f_\eps]$ as $\eps\to 0$ are
	in fact weakly convergent.
	
	Thus, if $F$ is a limit point of $\E[f_\eps]$, there exists a subsequence $\E[f_{\eps_n}]\to F$ weakly in $[0,T]\times\R^d\times\R^d$, and in particular
	\[
	\int_{\R^d\times\R^d}\E[f_{\eps_n}](t,dxdv)\to\int_{\R^d\times\R^d}F(t,dxdv)
	\]
	weakly on $[0,T]$ as $\eps_n\to 0$, so that
	\[
	\frac{d}{dt}\int_{\R^d\times\R^d}F(t,dxdv)=0
	\]
	by passing to the limit in Proposition \ref{apriori1} \ref{mass} in the sense of distributions on $(0,+\infty)$. Similarly $|v|\E[f_{\eps_n}]\to |v|F$ weakly in $[0,T]\times\R^d\times\R^d$, 
	and in particular
	\[
	\int_{\R^d}(1,v)\E[f_{\eps_n}](t,\cdot)dv\to\int_{\R^d}(1,v)F(t,\cdot)dv
	\]
	weakly on $[0,T]\times\R^d$ as $\eps_n\to 0$. Hence
	\[
	\partial_t\int_{\R^d}F(t,\cdot) dv+\nabla_x\cdot\int_{\R^d}F(t,\cdot)v dv=0
	\]
	by passing to the limit in the sense of distributions in $(0,+\infty)\times\R^d$.
	\end{proof}

The last fundamental properties of the limit points of  $\E[f_\eps(t,\cdot;C)]$ concern directly the structure of the limit equation and the uniqueness of its solution.

For the sake of simplicity and clarity we shall choose, from now on, $f_\eps(0,\cdot;C)=f_0(\cdot)\in \cM^1_+(\R^d\times\R^d)$: the results are valid also in the generic case, with suitable assumptions on the sequence $f_\eps(0,\cdot;C)$.

	\begin{lem}
	\label{L-MassCons2}

Assume that $f_\eps^0=f^0$ is a positive measure of finite mass supported in $\overline{B(0,R)}\times\overline{B(0,R)}$ for some $R>0$, independent of the configuration 
$C$ of inclusions, and let $F$ be a weak limit point of $\E[f_\eps]$ as $\eps\to 0$. Assume the existence of $\ell\in C(\R^d)$ such that $\ell\ge 0$ on $\R^d$, and of
a time-dependent positive Radon measure $m_F(t,\cdot)$ on $\R^d\times\R^d$ supported in $\R^d\times\overline{B(0,R)}$ for all $t\ge 0$ such that
\be
\label{equalimn0}
(\partial_t+v\cdot\nabla_x+\ell(v))F=m_F\quad\text{ in }\mathcal D'((0,+\infty)\times\R^d\times(\R^d\setminus\{0\}))\,.
\ee
Then
\[
(\partial_t+v\cdot\nabla_x+\ell(v))F-m_F=\lambda_F(t,dx)\otimes\delta_0(dv)\quad\text{ in }\mathcal D'((0,+\infty)\times\R^d\times\R^d)\,,
\]
where $\lambda_F(t,\cdot)$ is a time-dependent Radon measure on $\R^d$.
\end{lem}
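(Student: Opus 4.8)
The plan is to show that the distribution
$T:=(\pa_t+v\cdot\grad_x+\ell(v))F-m_F$
on $(0,+\infty)\times\R^d\times\R^d$, which by hypothesis (\ref{equalimn0}) vanishes on $(0,+\infty)\times\R^d\times(\R^d\setminus\{0\})$ and is hence supported in $(0,+\infty)\times\R^d\times\{0\}$, has the form $\l_F(t,dx)\otimes\de_0(dv)$ with $\l_F(t,\cdot)$ a Radon measure. First I would record that $T$ is a legitimate distribution of order at most $1$: since $\ell$ is continuous and $F,m_F$ are positive Radon measures whose $v$-support is contained in $\overline{B(0,R)}$, both $\ell(v)F$ and $m_F$ are Radon measures, while $(\pa_t+v\cdot\grad_x)F$ has order $\le 1$; moreover, by Lemma \ref{cor} applied to $F$ (a weak limit point of $\E[f_\eps]$, cf.\ Remark \ref{suppcomp}) together with the uniform bound on the $v$-support, $T$ is compactly supported in $(x,v)$ whenever $t$ is restricted to a compact subinterval of $(0,+\infty)$, which makes the integrations by parts and push-forwards below licit.

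The heart of the proof — and the step I expect to be the main obstacle — is to show that $v_jT=0$ for every $j=1,\dots,d$, by a concentration argument at $v=0$. The crucial (elementary) fact is that $v_j$ depends on $v$ only and therefore commutes with $\pa_t+v\cdot\grad_x$, so that $v_jT=(\pa_t+v\cdot\grad_x+\ell(v))(v_jF)-v_jm_F$. Fixing $\psi\in C^\infty_c$ and a cutoff $\th_\de(v)=\th(v/\de)$ with $\th\in C^\infty_c(B(0,1))$, $\th\equiv1$ near $0$, $0\le\th\le1$, one has $\langle v_jT,\psi\rangle=\langle T,v_j\psi\th_\de\rangle$ since $v_j\psi(1-\th_\de)$ vanishes near $\Supp T\subset\{v=0\}$; integrating by parts and using that $v_j$ and $\th_\de$ depend only on $v$,
\[
\langle v_jT,\psi\rangle=\int v_j\th_\de(v)\big[-(\pa_t+v\cdot\grad_x)\psi+\ell(v)\psi\big]F(t,dxdv)\,dt-\int v_j\th_\de(v)\,\psi\,m_F(t,dxdv)\,dt .
\]
On $\Supp\th_\de$ one has $|v_j|\le|v|\le\de$, so the right-hand side is $O(\de)$ — with constant controlled by $\psi$, by $\sup_{\overline{B(0,R)}}|\ell|$, and by the finite masses of $F$ and $m_F$ on the compact set $\Supp\psi$ — while the left-hand side is independent of $\de$; letting $\de\to0^+$ gives $\langle v_jT,\psi\rangle=0$.

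Granting this, I would finish as follows. Pick $\zeta\in C^\infty_c(\R^d)$ with $\zeta\equiv1$ on a neighbourhood of $\overline{B(0,R)}$ and let $\l_F$ be the distribution on $(0,+\infty)\times\R^d$ defined by $\langle\l_F,\phi\rangle:=\langle T,\phi\otimes\zeta\rangle$ (equivalently, the push-forward of $T$ under $(t,x,v)\mapsto(t,x)$; it does not depend on the choice of $\zeta$ because the $v$-support of $T$ lies in $\overline{B(0,R)}$). For $\varphi\in C^\infty_c$, Taylor's formula with integral remainder gives $\varphi(t,x,v)-\varphi(t,x,0)=\sum_j v_j\psi_j(t,x,v)$ with $\psi_j(t,x,v)=\int_0^1(\pa_{v_j}\varphi)(t,x,sv)\,ds$ smooth, so
\[
\langle T,\varphi\rangle-\langle\l_F,\varphi(\cdot,\cdot,0)\rangle=\Big\langle T,\zeta\sum_j v_j\psi_j\Big\rangle=\sum_j\langle v_jT,\zeta\psi_j\rangle=0 ,
\]
which is exactly $T=\l_F\otimes\de_0(dv)$. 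Finally, to see that $\l_F$ is a Radon measure (not merely an order-$1$ distribution), I would test against $\phi\in C^\infty_c((0,+\infty)\times\R^d)$ and use $\zeta\equiv1$ on the $v$-supports of $F$ and $m_F$ to write
\[
\langle\l_F,\phi\rangle=\langle(\pa_t+v\cdot\grad_x)F,\phi\rangle+\int\ell(v)\,\phi(t,x)\,F(t,dxdv)\,dt-\int\phi(t,x)\,m_F(t,dxdv)\,dt ,
\]
then note that the first term equals $\big\langle\pa_t\!\int_{\R^d}F\,dv+\grad_x\!\cdot\!\int_{\R^d}vF\,dv,\phi\big\rangle$, which vanishes by the continuity equation (\ref{limitec}) of Proposition \ref{conslimp}. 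Thus $\l_F$ is the (signed) Radon measure obtained by pushing $\ell(v)F-m_F$ forward by $(t,x,v)\mapsto(t,x)$; disintegrating in $t$ yields the time-dependent Radon measure $\l_F(t,\cdot)$ on $\R^d$ and the claimed identity in $\cD'((0,+\infty)\times\R^d\times\R^d)$. Besides the concentration estimate, the one thing that requires care throughout is the bookkeeping of supports, so that all the dualities and integrations by parts are justified; this is supplied by Lemma \ref{cor} and the uniform $v$-support bound.
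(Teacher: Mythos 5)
Your proof is correct, and it reaches the same identification of the multiplier, $\lambda_F=\int_{\R^d}\ell(v)F\,dv-\int_{\R^d}m_F\,dv$ via the continuity equation of Proposition \ref{conslimp}, but it gets to the structure $T=\lambda_F\otimes\delta_0(dv)$ by a genuinely different route. The paper first invokes the Schwartz structure theorem for distributions supported on the subspace $\{v=0\}$ (Theorems XXXV--XXXVI in \cite{Schwartz}), obtaining $T=\sum A_\alpha\otimes\partial_v^\alpha\delta_0$, and then kills all terms with $|\alpha|>0$ by observing that, tested against tensor products $\phi\otimes\psi$, the functional in $\psi$ splits into two pieces $S_1,S_2$ which are Radon measures in $v$, hence of order $0$. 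You instead bypass the structure theorem entirely: you prove directly that $v_jT=0$ by the cutoff estimate $\langle T,v_j\psi\theta_\delta\rangle=O(\delta)$ — which works precisely because $v_j\theta_\delta$ depends on $v$ only and so commutes with $\partial_t+v\cdot\nabla_x$, leaving only pairings against the measures $F$ and $m_F$ weighted by $|v_j|\le\delta$ — and then conclude by first-order Taylor expansion in $v$ (with the cutoff $\zeta$ to keep test functions compactly supported). Your argument is more elementary and self-contained, at the cost of being specific to the first-order transport structure; the paper's argument is shorter once the structure theorem is granted, and the two exploit the same underlying fact, namely that $\ell F$ and $m_F$ are order-zero objects while the transport part is controlled by the continuity equation. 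The support bookkeeping you flag (Lemma \ref{cor}, the $v$-support in $\overline{B(0,R)}$, local finiteness of $F$ and $m_F$) is indeed what makes the $O(\delta)$ estimate and the final push-forward legitimate, and you have handled it correctly.
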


\begin{proof}
	
	Consider the distribution 
	\[
	S=(\partial_t+v\cdot\nabla_x+\ell(v))F-m_F\in\mathcal D'((0,+\infty)\times\R^d\times\R^d)\,.
	\]
	It is assumed in the statement of the lemma that 
	\[
	S\big|_{(0,+\infty)\times\R^d\times(\R^d\setminus\{0\})}=0\,.
	\]
	Hence $\mathrm{supp}(S)\subset(0,+\infty)\times\R^d\times\{0\}$. By Theorems XXXV-XXXVI in chapter III of \cite{Schwartz},
	\[
	S=(\partial_t+v\cdot\nabla_x+\ell(v))F-m_F=\sum_{\text{locally finite}}A_\alpha\otimes\partial_v^\alpha\delta_0(dv)
	\]
	where $A_\alpha\in\mathcal D'((0,+\infty)\times\R^d)$. 
	
	Consider next, for each $\phi\in C^\infty_c((0,+\infty)\times\R^d)$, the linear functionals
	\[
	S_1:\,\psi\mapsto-\int_{(0,+\infty)\times\R^d\times\R^d}(\partial_t+v\cdot\nabla_x)\phi(t,x)\psi(v)F(t,dxdv)dt
	\]
	and
	\[
	S_2:\,\psi\mapsto\int_{(0,+\infty)\times\R^d\times\R^d}\phi(t,x)\psi(v)(\ell(v)F(t,dxdv)-m_F(t,dxdv))dt\,.
	\]
	These linear functionals are Radon measures on $\R^d$ with finite total variation supported in $\overline{B(0,R)}$. Therefore
	\[
	\langle S,\phi\otimes\psi\rangle=\langle S_1,\psi\rangle+\langle S_2,\psi\rangle\,,
	\]
	and since $S_1$ and $S_2$ are distributions of order $0$,
	\[
	|\alpha|>0\implies A_\alpha=0\,,
	\]
	and hence
	\[
	S=(\partial_t+v\cdot\nabla_x+\ell(v))F-m_F=A_0\otimes\delta_0\quad\text{ in }\mathcal D'((0,+\infty)\times\R^d\times\R^d)\,.
	\]
	By Proposition \ref{apriori1} \ref{cont}, one has $\langle S_1,1\rangle=0$, so that
	\[
	\begin{aligned}
		\langle S,\phi\otimes 1\rangle=&\langle S_2,1\rangle
		\\
		=&\!\!\!\int_{(0,+\infty)\times\R^d\times\R^d}\!\!\!\!\!\!\!\!\!\phi(t,x)(\ell(v)F(t,dxdv)\!-\!m_F(t,dxdv))dt=\!\!\langle A_0,\phi\rangle\,,
	\end{aligned}
	\]
	meaning that
	\[
	\lambda_F(t,\cdot)=A_0(t,\cdot)=\int_{\R^d}\ell(v)F(t,\cdot)dv-\int_{\R^d}m_F(t,\cdot)dv\,.
	\]
	In particular $\lambda_F$ is a time-dependent Radon measure.
		
\end{proof}

\begin{remark}
	Notice that the fact that the distribution $\l_F$ is of order $0$ is a consequence of the mass conservation property valid for the limit point $F$ and proved in Proposition \ref{conslimp}.
\end{remark}

%%%%%

\smallskip
Let $\ell\in C_b(\R^d)$ and let $\mathcal K:\,C_0(\R^d)\to C_0(\R^d)$ be a continuous linear map. The notation $C_0(\R^d)$ designates
the Banach space of real-valued continuous functions on $\R^d$ vanishing at infinity, equipped with the norm defining uniform convergence on $\R^d$. Thus 
the adjoint of $\mathcal K$ is a continuous linear map $\mathcal K^*:\,\mathcal M^1(\R^d)\to\mathcal M^1(\R^d)$, since the topological dual of 
$C_0(\R^d)$ is $\mathcal M^1(\R^d)$ by Theorem 6.6 in chapter II of \cite{Malliavin}. Henceforth we denote by $\tilde{\mathcal K}$ the operator defined 
on $C_0(\R^d\times\R^d)$ by the formula
\[
(\tilde{\mathcal K}\psi)(x,v):=(\mathcal K\psi(x,\cdot))(v)\,.
\]

\noindent

\begin{lem}
	\label{unic}
	Let $F(t,\cdot)$ be a time-dependent Radon measure on $\R^d\times\R^d$ with finite total variation, continuous on finite time intervals for the weak 
	topology of $\mathcal M^1(\R^d\times\R^d)$ and such that $F(t,\cdot)$ has compact support for all $t\ge 0$. Assume that $F$ is a weak solution to
	\[
	\left\{
	\begin{aligned}
		{}&(\partial_t+v\cdot\nabla_x)F+\ell F=\tilde{\mathcal K}^*(\ell F)\,,\qquad x,v\in\R^d\,,\,\,t>0\,,
		\\
		&F(0,\cdot)=0\,.
	\end{aligned}
	\right.
	\]
	Then $F(t,\cdot)=0$ for all $t\ge 0$.
\end{lem}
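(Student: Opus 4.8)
The plan is to prove uniqueness by a Gronwall-type argument on the total variation of $F(t,\cdot)$, exploiting that the transport part $\partial_t + v\cdot\nabla_x$ preserves mass and that the ``gain'' term $\tilde{\mathcal K}^*(\ell F)$ is controlled, as a measure, by the ``loss'' term $\ell F$. First I would represent the solution via Duhamel's formula along the characteristics of the free transport: writing $T_t$ for the (mass-preserving) pushforward by the flow $(x,v)\mapsto(x+tv,v)$, the equation $(\partial_t + v\cdot\nabla_x)F + \ell F = \tilde{\mathcal K}^*(\ell F)$ with $F(0,\cdot)=0$ is equivalent to the mild formulation
\[
F(t,\cdot) = \int_0^t T_{t-s}\Big(e^{-(t-s)\ell}\,\tilde{\mathcal K}^*(\ell F)(s,\cdot)\Big)\,ds\,,
\]
where $e^{-r\ell}$ denotes multiplication by the bounded function $(x,v)\mapsto e^{-r\ell(v)}$; here I use $\ell\in C_b(\R^d)$ so that the exponential integrating factor is a legitimate bounded continuous multiplier and all the measures involved have finite total variation. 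Establishing this equivalence (weak solution $\iff$ mild solution) is a standard but slightly delicate measure-theoretic step, which one can justify by testing against $\phi(t,x,v)$ of the product/separated form and using that $F$ is weakly continuous in $t$ on finite intervals with compact support, exactly as in the manipulations already carried out in the proof of Proposition \ref{apriori1}.

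Next I would take total variation norms. Since $T_{t-s}$ and the multiplier $e^{-(t-s)\ell}$ (with values in $(0,1]$) are contractions for the total variation norm $\|\cdot\|_{TV}$ on $\mathcal M^1(\R^d\times\R^d)$, and since $\tilde{\mathcal K}^*$ is bounded by $\|\mathcal K\|:=\|\mathcal K\|_{\mathcal L(C_0,C_0)}$ for that same norm (because $\mathcal K^*$ is the Banach-space adjoint of a bounded operator, hence $\|\mathcal K^*\|_{\mathcal L(\mathcal M^1,\mathcal M^1)}=\|\mathcal K\|$, and $\tilde{\mathcal K}^*$ inherits this bound acting fiberwise in $v$), and since $\ell$ is bounded so that $\|\ell F(s,\cdot)\|_{TV}\le\|\ell\|_\infty\|F(s,\cdot)\|_{TV}$, one gets
\[
\|F(t,\cdot)\|_{TV} \le \|\mathcal K\|\,\|\ell\|_\infty\int_0^t \|F(s,\cdot)\|_{TV}\,ds\,.
\]
The function $t\mapsto\|F(t,\cdot)\|_{TV}$ is finite (finite total variation) and, on each finite interval, locally bounded — this uses weak continuity in $t$ together with lower semicontinuity of $\|\cdot\|_{TV}$ and uniform compact support, so a uniform bound on $[0,T]$ is available. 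Gronwall's inequality then forces $\|F(t,\cdot)\|_{TV}=0$ for all $t\ge 0$.

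The main obstacle I anticipate is not the Gronwall estimate itself but the rigorous passage from the distributional equation to the Duhamel representation, and in particular checking that $\tilde{\mathcal K}^*(\ell F)$ is a well-defined time-dependent finite measure to which the transport semigroup and the integrating factor can be applied. One has to verify that $s\mapsto \tilde{\mathcal K}^*(\ell F)(s,\cdot)$ is weakly measurable with integrable (indeed locally bounded) total variation, that the convolution-in-time integral makes sense as a Bochner-type integral in $(\mathcal M^1(\R^d\times\R^d), \text{weak-}*)$ or at least in duality against $C_c$, and that the supports stay compact on finite time intervals so that no mass escapes to infinity — the latter following, as before, from the fact that the characteristics of $v\cdot\nabla_x$ move points at bounded speed $|v|$ which is bounded on the (compact) support. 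Once these bookkeeping points are in place, the contraction estimates are immediate from the fact that every operator appearing is a contraction or is bounded by an explicit constant on $\mathcal M^1$, and uniqueness follows.
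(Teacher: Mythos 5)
Your argument is correct in substance but follows a genuinely different route from the paper's. The paper proves uniqueness by duality: it constructs, in Lemma \ref{lemunic}, a solution $\phi$ of the full backward adjoint problem $-(\partial_t+v\cdot\nabla_x)\phi+\ell\phi=\ell\tilde{\mathcal K}\phi$ with arbitrary final datum $\phi^T$ (via a mild formulation and Proposition 1.2 in chapter 3 of \cite{Pazy}), and then tests $F$ against $\phi$ to conclude $\int\phi^T\,F(T,dxdv)=0$. You instead put the Duhamel representation on the forward, measure-valued side and close a Gronwall estimate on the total variation norm, using that the transport pushforward and the integrating factor are contractions for $\|\cdot\|_{TV}$ and that $\|\tilde{\mathcal K}^*\|_{\mathcal L(\mathcal M^1)}=\|\mathcal K\|$. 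What your approach buys is that you never need to solve the adjoint integro-differential problem: the only test functions required are the explicit ones $\phi(t,x,v)=e^{-(T-t)\ell(v)}\psi(x+(T-t)v,v)$, which satisfy $-(\partial_t+v\cdot\nabla_x)\phi+\ell(v)\phi=0$ and convert the weak formulation directly into your mild identity tested against $\psi$; Lemma \ref{lemunic} then becomes unnecessary. What it costs is exactly the bookkeeping you flag: the weak-to-mild passage uses test functions that are not compactly supported in $(t,x,v)$ and only $C^1$ in $(t,x)$, which is justified, as in the paper's own proof, by the a priori weak continuity of $t\mapsto F(t,\cdot)$ and its uniform compact support on finite time intervals (Remark 8.1.1 in \cite{Ambrosio}). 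Two small points to tighten: the uniform bound $\sup_{t\in[0,T]}\|F(t,\cdot)\|_{TV}<+\infty$ comes from weak continuity via the uniform boundedness principle (lower semicontinuity of the total variation gives measurability of $t\mapsto\|F(t,\cdot)\|_{TV}$, not its boundedness); and $e^{-(t-s)\ell}$ takes values in $(0,1]$ only if $\ell\ge 0$, which holds in the intended application where $\ell(v)=\sigma\min(|v|,R+1)$ but is not part of the standing hypothesis $\ell\in C_b(\R^d)$ --- in general one simply absorbs the factor $e^{T\sup|\ell|}$ into the Gronwall constant.
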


\begin{proof}
	
	Let $T>0$, and let $\phi\in C_0([0,T]\times\R^d\times\R^d)$ such that $\partial_t\phi$ and $\partial_{x_j}\phi$ exist and belong to $C([0,T]\times\R^d\times\R^d)$ 
	for $j=1,\ldots,d$. Using $\phi$ as a test function in the weak formulation of the Cauchy problem verified by $F$ shows that
	\begin{align*}
		\int_{\R^d\times\R^d}\!\!\! \phi(&T,x,v) F(T,dxdv)-\!\!\!\int_0^T\!\!\!\int_{\R^d\times\R^d}\!\!\!((\partial_t+v\cdot\nabla_x)\phi(t,x,v)F(t,dxdv)dt
		\\
	\!\!\!\!	&=\int_0^T\!\!\!\int_{\R^d\times\R^d}\ell(v)(\mathcal K(\phi(t,x,\cdot))(v)-\phi(t,x,v))F(t,dxdv)dt.
	\end{align*}
	As mentioned above, see Remark 8.1.1 in \cite{Ambrosio} for using 
	a class of test functions wider than $C_c^\infty((0,T)\times\R^d)$.
	Besides, since it is known a priori that $F(t,\cdot)$ 
	is weakly continuous in $t\in[0,T]$ with values in the space of compactly supported Radon measures, one uses $\phi(T,\cdot,\cdot)$ as a test function to identify $F(T,\cdot)$, which is the reason 
	for dropping the compact support assumption on $\phi$ in all its variables).
	
	This identity shows that
	\[
	\int_{\R^d\times\R^d}\phi(T,x,v)F(T,dxdv)=0
	\]
	for all $\phi\in C_0([0,T]\times\R^d\times\R^d)$ such that $\partial_t\phi$ and $\partial_{x_j}\phi$ exist and belong to $C([0,T]\times\R^d\times\R^d)$ for all
	$j=1,\ldots,d$, satisfying
	\[
	-(\partial_t+v\cdot\nabla_x)\phi+\ell(v)\phi=\ell(v)\tilde{\mathcal K}\phi\,,\qquad x,v\in\R^d\,,\,\,t\in(0,T)\,.
	\]
	The next lemma shows that $\phi(T,x,v)$ can be chosen to be any function in the space $C_0(\R^d\times\R^d)$ with $\partial_{x_j}\phi\in C_0(\R^d\times\R^d)$ 
	for all $j=1,\ldots,d$. Hence $F(T,\cdot)=0$, and since $T>0$ is arbitrary, this implies that $F(t,\cdot)=0$ for all $t\geq 0$.	
\end{proof}

\noindent
\begin{lem}
	\label{lemunic}
	 Let $T>0$, let $\phi^T\in C_0(\R^d\times\R^d)$, such that $\partial_{x_j}\phi^T$ exist and belong to $C_0(\R^d\times\R^d)$ for $j=1,\ldots,d$.
There exists $\phi\in C_0([0,T]\times\R^d\times\R^d)$ such that $\partial_{x_j}\phi$ exist and belong to $C_0([0,T]\times\R^d\times\R^d)$ for $j=1,\ldots,d$, 
and such that $\partial_t\phi\in C([0,T]\times\R^d\times\R^d)$, solution to the (backward) Cauchy problem
\[
\left\{
\begin{aligned}
	{}&-(\partial_t+v\cdot\nabla_x)\phi+\ell(v)\phi=\ell(v)\tilde{\mathcal K}\phi\,,\qquad x,v\in\mathrm R^d\,,\,\,0<t<T\,,
	\\
	&\phi(T,\cdot)=\phi^T\,.
\end{aligned}
\right.
\]
\end{lem}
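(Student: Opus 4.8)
The plan is to recast the backward Cauchy problem as a Duhamel integral equation in the Banach space $C([0,T];C_0(\R^d\times\R^d))$ and solve it by a fixed point argument, the point being that the right-hand side $\ell(v)(\phi-\tilde{\mathcal K}\phi)$ involves only the \emph{bounded} operator $L:=\ell(v)(\Id-\tilde{\mathcal K})$. Let $U(\tau)$ be the free transport group on $C_0(\R^d\times\R^d)$, $(U(\tau)g)(x,v):=g(x-\tau v,v)$. The problem is then equivalent to
\[
\phi(t)=U(t-T)\phi^T-\int_t^T U(t-s)\,(L\phi)(s)\,ds,\qquad 0\le t\le T,
\]
or, written along characteristics,
\[
\phi(t,x,v)=\phi^T(x+(T-t)v,v)-\int_t^T\ell(v)\,(\phi-\tilde{\mathcal K}\phi)(s,x+(s-t)v,v)\,ds .
\]
First I would check the two structural facts this rests on. (a) $U$ is a strongly continuous group of isometries of $C_0(\R^d\times\R^d)$: only strong continuity is not formal, and it holds because for $g\in C_0(\R^d\times\R^d)$ and $|v|$ large both $g(x-\tau v,v)$ and $g(x,v)$ are small, while for $|v|$ bounded $|g(x-\tau v,v)-g(x,v)|$ is small by uniform continuity of $g$. (b) $L$ is a bounded operator on $C_0(\R^d\times\R^d)$: multiplication by $\ell\in C_b$ is bounded, and $\tilde{\mathcal K}$ maps $C_0(\R^d\times\R^d)$ into itself with $\|\tilde{\mathcal K}\|\le\|\mathcal{K}\|$ --- the pointwise bound is immediate, while joint continuity and decay at infinity follow from continuity of $x\mapsto\psi(x,\cdot)$ as a map $\R^d\to C_0(\R^d)$ together with the fact that the continuous linear map $\mathcal{K}$ carries norm-compact subsets of $C_0(\R^d)$ to norm-compact subsets.

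With (a) and (b) in place, the affine map $\mathcal A\psi:=U(\cdot-T)\phi^T-\int_{\cdot}^T U(\cdot-s)(L\psi)(s)\,ds$ sends $C([0,T];C_0(\R^d\times\R^d))$ into itself (continuity in $t$ via strong continuity of $U$ and dominated convergence, $C_0$-valuedness by the Bochner integral) and obeys $\|(\mathcal A\psi_1-\mathcal A\psi_2)(t)\|_\infty\le\|L\|\int_t^T\|(\psi_1-\psi_2)(s)\|_\infty\,ds$; hence some iterate $\mathcal A^n$ is a strict contraction and $\mathcal A$ has a unique fixed point $\phi\in C([0,T];C_0(\R^d\times\R^d))$, which satisfies $\phi(T,\cdot)=\phi^T$ and, the image of the compact interval $[0,T]$ being compact in $C_0$, belongs to $C_0([0,T]\times\R^d\times\R^d)$.

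To get the regularity in $x$, note that $U(\tau)$ translates only the $x$ variable and $L$ acts only in $v$, so both commute with each $\partial_{x_j}$; the identical fixed point argument therefore runs in the Banach space $Y:=\{g\in C_0(\R^d\times\R^d):\partial_{x_j}g\in C_0(\R^d\times\R^d),\,j=1,\dots,d\}$ normed by $\|g\|_\infty+\sum_j\|\partial_{x_j}g\|_\infty$, on which $U$ is again a strongly continuous group of isometries and $L$ is bounded. Since $\phi^T\in Y$, the fixed point obtained in $C([0,T];Y)$ must coincide, by uniqueness in $C([0,T];C_0)$, with $\phi$; hence $\partial_{x_j}\phi\in C([0,T];C_0(\R^d\times\R^d))\subset C_0([0,T]\times\R^d\times\R^d)$. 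Finally, I would recover $\partial_t\phi$ and the equation from the characteristic form of the Duhamel identity: differentiate it in $t$ keeping $s$ as the integration variable, so that $t$ enters only through the lower limit and through the transported argument $x+(s-t)v$; since the spatial regularity of $\phi$ --- hence of $\tilde{\mathcal K}\phi\in C([0,T];Y)$ --- is already known, Leibniz' rule shows that the right-hand side is $C^1$ in $t$ with continuous derivative, and a direct computation combining this with the $x$-differentiated form yields $(\pa_t+v\cdot\grad_x)\phi=\ell(v)(\phi-\tilde{\mathcal K}\phi)$ pointwise on $(0,T)\times\R^d\times\R^d$, with $\phi(T,\cdot)=\phi^T$.

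The step I expect to be delicate is this last one. One cannot differentiate the abstract Duhamel formula in $t$ as it stands, because $\phi^T$ need not lie in the domain of the transport generator $-v\cdot\grad_x$ on $C_0(\R^d\times\R^d)$: the function $v\,\partial_{x_j}\phi^T$ need not vanish, nor even be bounded, as $|v|\to\infty$. The remedy is to pass to the characteristic form, where differentiation in $t$ only ever falls on the integration limits, on the scalar factor $\ell(v)$, and on the spatial slot of functions whose $x$-regularity has already been secured --- so that no extra regularity of $\phi$ in $t$ is needed to justify the computation.
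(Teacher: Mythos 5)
Your proof is correct and follows essentially the same route as the paper: a mild (Duhamel) formulation along characteristics solved by a bounded-perturbation/fixed-point argument, with the $x$-regularity obtained by running the same argument on the partial derivatives and the $t$-regularity recovered by differentiating the characteristic form of the integral equation (including the same observation that $\partial_t\phi$ is continuous but need not be bounded in $v$). The only cosmetic difference is that the paper absorbs the damping factor $e^{-\ell(v)(T-t)}$ into the unperturbed semigroup and treats only $\ell\tilde{\mathcal K}$ as the perturbation (citing Pazy), whereas you keep the pure transport group and iterate the full $\ell(v)(\operatorname{Id}-\tilde{\mathcal K})$.
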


\begin{proof}
	A mild solution to the backward Cauchy problem in Lemma \ref{lemunic} is a function $\phi\in C_0([0,T]\times\R^d\times\R^d)$ satisfying the integral equation 
	\begin{align*}
	\phi(t,x,v) = & e^{-\ell(v)(T-t)}\phi^T(x+(T-t)v,v)
	    \\
	    & - \int_t^T\!\!e^{-\ell(v)(s-t)}\ell(v)\tilde{\mathcal K}\phi(s,x+(s-t)v,v)ds\,,
	\end{align*}
	for all $x,v\in\R^d$ and $t\in[0,T]$. 
	
	The family of linear maps defined by the formula
	\[
	S_0(t)\psi(x,v)=e^{-t\ell(v)}\psi(x-tv,v)
	\]
	is a strongly continuous semigroup of bounded operators on $C_0([0,T]\times\R^d\times\R^d)$ with norm $\|S_0(t)\|\le\exp(T\sup_{v\in\R^d}\ell(v))$. By construction
	$\tilde{\mathcal K}$ is a bounded operator on $C_0(\R^d\times\R^d)$ with the same norm as that of the bounded operator $\mathcal K$ defined on $C_0(\R^d)$.
	The existence (and uniqueness) of the mild solution $\phi$ follows from Proposition 1.2 in chapter 3 of \cite{Pazy}. 
	
	The regularity in $x$ follows from applying the same argument to each partial derivative $\partial_{x_j}\phi$ for $j=1,\ldots,d$. 
	
	Finally, the regularity in $t$ follows from deriving in $t$ both sides of the integral equation above:
	\begin{align*}
		\partial_t\phi(t,\!x,\!v) = & \; e^{-\ell(v)(T\!-\!t)}\ell(v)\phi^T\!(x\!+\!(T\!-\!t)v,\!v)
		\\
		& -\!e^{-\ell(v)(T\!-\!t)}v\!\cdot\!\nabla_x\phi^T\!(x\!+\!(T\!-\!t)v,\!v) +\ell(v)\tilde{\mathcal K}\phi(t,x,v)
		\\
		&-\int_t^Te^{-\ell(v)(s-t)}(\ell(v))^2\tilde{\mathcal K}\phi(s,x\!+\!(s\!-\!t)v,v)ds
		\\
		&+\ell(v)\sum_{j=1}^d\int_t^Te^{\ell(v)(s-t)}v_j\tilde{\mathcal K}\partial_{x_j}\phi(s,x\!+\!(s\!-\!t)v,v)ds\,.
	\end{align*}
	Notice that the 2nd and 5th terms on the right-hand side of this identity involve products by the components of $v$, so that $\partial_t\phi$ is only continuous, but  may fail to be bounded,
	let alone converge to $0$ as $|v|\to+\infty$.
\end{proof}

\smallskip
It remains to check that Lemmas \ref{unic} and \ref{lemunic} apply to the equation satisfied by the limit points of $\E[f_\eps]$ as $\eps\to 0$. We shall see that, in that equation, $\ell(v)=\sigma|v|$, where $\sigma$ is a positive constant, so that $\ell$ is obviously
continuous on $\R^d$, but not bounded. However, since it is known a priori that $F(t,\cdot)$ has compact support included in $\overline{B(0,R(1+t))}\times\overline{B(0,R)}$ for 
each $t\ge 0$, one can replace $\ell(v)=\sigma |v|$ with $\ell(v):=\sigma \min(|v|,R+1)$, which is both continuous and bounded on $\R^d$, without affecting $F$.

As for the operator $\mathcal K$, it will turn out to be of the form
\[
\tfrac1{B^{d-1}}\int_{\nu\in B^{d-1}(0,1)}\psi\left(\mathfrak V(|\nu|,v)\right)d\nu+k(|v|)\psi(0)\,,
\]
where
\[
\mathfrak V(\beta,v):=\mathcal V(\beta,|v|)\tfrac{v}{|v|}\,.
\]
In the definition of $\mathfrak V$, the function $\mathcal V:\,[0,1]\times[0,+\infty)\to[0,+\infty)$ is such that 
\[
\left\{
\begin{aligned}
	{}&\mathcal V(\beta,\cdot):\,(c(\beta),+\infty)\to(0,+\infty)\text{ is a }C^1\text{-diffeomorphism, and}
	\\
	&\mathcal V(\beta,\cdot)([0,c(\beta)])=\{0\}\,,\quad\text{ while }\quad\mathcal V(\beta,r)\le r\text{ for all }r\ge 0\,,
\end{aligned}
\right.
\]
with $c\in C^1((0,1))\cap C([0,1])$ nonincreasing. Finally
\[
k(u):=|\{\nu\in B^{d-1}(0,1)\text{ s.t. }|c(\nu)|\ge u\}|/B^{d-1},.
\]
One easily checks that $\mathcal K:\,C_0(\R^d)\to C_0(\R^d)$ is a continuous linear map satisfying
\[
\psi\in C_0(\R^d)\text{ and }\psi\ge 0\text{ on }\R^d\implies{\mathcal K}\psi\ge 0\,,
\]
and
\[
|{\mathcal K}\psi(v)|\le\tilde{\mathcal K}|\psi|(v)\le\sup_{v\in\R^d}|\psi(v)|\,.
\]
Finally, the adjoint of $\mathcal K$ is the continuous linear map $\mathcal K^*:\,\mathcal M^1(\R^d)\to \mathcal M^1(\R^d)$ given by the formula
\[
\mathcal K^*\mu(dw)=\left(\int_{\R^d}k(|v|)\mu(dv)\right)\delta_0(dw)+\tfrac1{B^{d-1}}\int_{|\nu|\le 1}\mathfrak V(|\nu|,\cdot)\#\left(\mu\big|_{\R^d\setminus\{0\}}\right).
\]
In the latter formula, the notation $\Phi\#m$ designates the push-forward of the measure $m$ by the (measurable) map $\Phi$, i.e. $\Phi\#m(B)=m(\Phi^{-1}(B))$ (see for instance
formula 1.4.3 in chapter IV of \cite{Malliavin}), while the notation $m\big|_A$ designates the restriction of the measure $m$ to the measurable set $A\subset\R^d$.

\section{The limit equation}
\label{Equalim}

 In this section we shall derive the limit equation for $\E[f_\eps(t,\cdot;C)]$ in $(0,+\infty)\times\R^d\times\R^d\setminus\{0\}$, i.e. an equation of the form (\ref{equalimn0}), with the help of the technique adopted for instance  in \cite{Gallavotti}, which is based on the construction of a semiexplicit solution to (\ref{evomisura}) making use of the backward flow associated to (\ref{moto}). We shall then apply Lemma \ref{L-MassCons2} to recover the limit equation in $(0,+\infty)\times\R^d\times\R^d$.
 
 Again for the sake of simplicity, we shall assume $$f_\eps(0,dxdv;C)=f_0(x,v)dxdv.$$ Although in the physical model  the light particles should  have $x\in Z_\eps[C]$, this requirement becomes unnecessary when $\eps\to 0^+$, since $P(x\in B(c,\eps))<\eps$; in any case,
 all results can be generalised to a generic sequence of initial data $f_\eps(0,\cdot;C)\in \cM^1_+(\R^d\times\R^d)$, with suitable straightforward assumptions on this sequence.
 
 We define, from (\ref{defevomis}), the evolution of 
 $$
 F_\eps(t,dxdv)=\E[f_\eps(t,dxdv;C)] ,
 $$
and analyse  its limit evolution equation in $(0,+\infty)\times\R^d\times\R^d\setminus \{0\}$, in order to apply Lemma \ref{L-MassCons2}.
 
 We study then
 \be
 \label{defevomise}
 \int F_\eps(t,dxdv)\phi(x,v):=\int dx dv f_0(x,v)\E[\phi_\eps(t,x,v;C)]
 \ee
for $\phi\in C_c(\R^d\times\R^d\setminus \{0\})$ (where $F_\eps(0,dxdv)=f_0(x,v)dxdv$).
 
 Call $(X^\eps_{-t},V^\eps_{-t})(x,v;C)$ the inverse flow associated to the equations of motion (\ref{moto}).
 Then,    
 \begin{align}
 \label{evocompl}
 \int &F_\eps(t,dxdv)\phi(x,v)=\int dx dv F_\eps(t,x,v)\phi(x,v) 
       \\
 & =  \int dxdv \, \E[f_0((X_{-t}^\eps, V_{-t}^\eps)(x,v;C) ) J_{\eps,C}(t,x,v)\indc_{|v|>0}]\phi(x,v) 
 \nonumber
\end{align}where $J_{\eps,C}(t,x,v)$ is the Jacobian of the change of coordinates $(x,v)\to (X_{t}^\eps, V_{t}^\eps)(x,v;C)$
 (we shall see in Subsection \ref{Jacsec} that the inverse flow is defined whenever $|v|>0$). 

An equivalent expression for $F_\eps(t,x,v)$ may be derived from (\ref{evomisura}), so that we may write 
\begin{align}
{F}_\eps 
    &  (t,x,v) \nonumber \\
    & = \E [f_0(\!(X_{-t}^\eps, V_{-t}^\eps)(x,v;C)\!) J_{\eps,C}(t,x,v)\indc_{|v|>0} ]
            \label{va1}  \\
    & = \E  \bigg[ f_0 \Big(\!(X_{-t}^\eps,\! V_{-t}^\eps)(x,\!v;C) \Big)
             \label{va2}             \\
        & \times \!\! \indc_{|v|>0} e^{d \!\int_{0}^{t}ds\ka_\eps( \! X_{-s}^\eps \!(x,v;C);C) \!
        \big[ \! S(| V_{-s}^\eps(x,v;C)|)+ V_{-s}^\eps(x,v;C)\cdot\grad_V S(| V_{-s}^\eps(x,v;C)|)\big]} \!\bigg]\!.
       \nonumber
\end{align}
In next subsections we shall define all the elements which will allow us to make explicit (\ref{va1}).

\subsection{The backward flow}

Since ${\bf P}(x\in B_\eps(c))<\eps$, without loss of generality, we shall assume $x\in Z_\eps[C]$.

Define
\begin{align*}
\Big\{ t\in[0,+\infty) &  \; \text{ s.t. } \; X_{t}^\eps (x,v;C)\notin Z_\eps[C] \Big\}
   \\[1mm]
 & = [\tau^1_-(x,v),\tau^1_+(x,v)]\cup[\tau^2_-(x,v),\tau^2_+(x,v)]\cup\ldots
\end{align*}
with
$$
0<\tau^1_-(x,v)<\tau^1_+(x,v)<\tau^2_-(x,v)<\tau^2_+(x,v)<\ldots\,.
$$
and the assumption that 
$\tau^k_+(x,v)=+\infty$ implies that $[\tau^l_-(x,v),\tau^l_+(x,v)]=\varnothing$ for all $l>k$ (here  $\tau_\pm^j(x,v)=\tau_\pm^j(x,v;C)$, but we drop the dependence on $C$ in order to simplify the notation).

The times $\tau^j_-(x,v)$ and $\tau^j_+(x,v)$ are respectively the entering and the exit time in the $j$-th connected region occupied by one or more obstacles along the forward flow (\ref{moto}). 

The number of connected regions occupied by the obstacles crossed by $\{X_{s}^\eps (x,v;C)\}_{s\in(0,t)}$ is therefore given by 
$$N(t,x,v;C) =\max_l \{l \quad \mathrm{s.t.}\quad\tau^l_-(x,v)<t\}.$$

 Consider $(x,v)=(X_{t}^\eps, V_{t}^\eps)(y,z;C)$ for $(y,z)\in \R^d\times\R^d$ and denote $N=N(t,y,z;C)$.

We may define 
\begin{eqnarray*}
	\tau^l_{i,+}(x,v)&=&t-\tau^{N+1-l}_+(y,z)\\
	\tau^l_{i,-}(x,v)&=&t-\tau^{N+1-l}_-(y,z)\\
	\tau^0_{i,-}(x,v)&=&0\\
	\tau^{N+1}_{i,+}(x,v)&=&t
\end{eqnarray*}
with
$$
0<\tau^1_{i,+}(x,v)<\tau^1_{i,-}(x,v)<\tau^2_{i,+}(x,v)<\tau^2_{i,-}(x,v)<\ldots\,.
$$

The backward flow is then obtained as the solution $(Y_{t}^\eps, Z_{t}^\eps)(x,-v;C)$ to the equations of motion 
\be
	\label{motoinv}
	\left\{
	\begin{array}{l}
		\dot Y=Z\\
		\\
		\dot Z=\frac{\ka}\eps\indc_{Z^c_\eps[C]}(Y)S(|Z|)Z=\ka_\eps(Y;C)S(|Z|)Z\\
		\\
		(Y,Z)\big|_{t=0}=(x,-v),
	\end{array}
	\right.
\ee
i.e. $X_{-t}^\eps(x,v;C)=Y_{t}^\eps(x,-v;C)$, while $V_{-t}^\eps(x,v;C)=-Z_{t}^\eps(x,-v;C)$.

In particular, we may express the solution to (\ref{motoinv}) in semiexplicit form as
\begin{eqnarray}
Y_{t}^\eps (x,-v;C) \!\!\!&\!\!=\!\!&\!\!\!  X_{-t}^\eps (x,v;C) 
        \nonumber. \\[2mm] 
        \!\!&\!=\!&\!\!    x + \! \sum_{j=0}^{N-1} 
        Z(\tau^{j+1}_{i,+})(\tau^{j+1}_{i,+} - \tau^{j}_{i,-}) +
        Z(\tau^{N+1}_{i,+})(t - \tau^{N}_{i,-})
        \nonumber \\
        & & -\sum_{j=1}^{N}|Y_{\tau^{j}_{i,-}}^\eps (x,-v;C)-Y_{\tau^{j}_{i,+}}^\eps (x,-v;C)|\hat{v}
        \label{flussoind} \\[3mm]
Z_{t}^\eps(x,-v;C)  \!\!&\!=\!&\!\!  -V_{-t}^\eps(x,v;C)
        \nonumber \\[2mm]
         \!\!&\!=\!&\!\!  -\left(v+\hat{v}\sum_{j=1}^{N}
            \frac{\ka}{\eps}\int_{\tau^{j}_{i,+}}^{\tau^{j}_{i,-}}S(|Z|(s))|Z(s)|ds\right) ,
            \nonumber
\end{eqnarray}
where $Z(t)=Z_{t}^\eps(x,-v;C)$ and $\hat{v}=\frac{v}{|v|}$.

For ease of notation, from now on, we shall drop the dependence on the configuration of obstacles $C$ in the flows. 

\subsection{The Jacobian $J_{\eps,C}(t,x,v)$}
\label{Jacsec}

We may now compute the Jacobian $J_{\eps,C}(t,x,v)$, observing that, from (\ref{va1})-(\ref{va2}), we deduce that:
\begin{align}
J_{\eps,C} & (t,x,v) \nonumber \\
     & =e^{d \int_{0}^{t}ds\ka_\eps(X_{-s}^\eps (x,v);C)
     \big[S(| V_{-s}^\eps(x,v)|)+ V_{-s}^\eps(x,v)\cdot\grad_V S(| V_{-s}^\eps(x,v)|)\big]}.
     \label{jac0}
\end{align}
This expression can be made more explicit since, when \newline $Y_{s}^\eps (x,-v)\in Z^c_\eps[C]$,
$$
\frac{\ka}{\eps} S(|Z|)=\frac{\dot{|Z|}}{|Z|},
$$
so that 
\begin{align*}
\exp\left(d \int_{0}^{t}\ka_\eps(Y_{s}^\eps (x,-v);C)S(| Z_{s}^\eps(x,-v)|)\right) 
       & =\left(\frac{|Z^\eps_{\tau^N_{i,-}}|}{|v|}\right)^{\!\!d} \\
       & =\left(\frac{| Z_{t}^\eps(x,-v)|}{|v|}\right)^{\!\!d}
\end{align*}
and
$$
\exp\!\left( \int_{0}^{t} \!\! ds \, \ka_\eps(X_{-s}^\eps (x,v);C)V_{-s}^\eps(x,v)\cdot\grad_V S(| V_{-s}^\eps(x,v)|)\right)
       \!\!=\!\! \frac{S(| Z_{t}^\eps(x,-v)|)}{S(|v|)}.
$$
We have then 
\be
\label{jac1}
J_{\eps,C}(t,x,v)=\left(\frac{| Z_{t}^\eps(x,-v)|}{|v|}\right)^d\frac{S(| Z_{t}^\eps(x,-v)|)}{S(|v|)},
\ee
while the Jacobian of the change of coordinates $(x,v)\to (Y_{t}, -Z_{t})$ is $\frac 1{J_{\eps,C}}$.

The change of coordinates $(x,v)\to X_{t}^\eps (x,v), V_{t}^\eps(x,v)$ is therefore well defined when $|v|>0$.

\subsection{The flow tube}
In order to perform explicit computations in the expectations, it is  useful to define the notion of \textit{flow tube} associated to a space trajectory .

On specific configurations of obstacles, this notion allows to simplify remarkably  the expression of $\E$, giving also an intuition on the behaviour of the particle system.

We define here the flow tube associated to $(Y_{t}^\eps (x,-v), Z_{t}^\eps(x,-v))$ as 
\be
\label{tubo}
\mathcal{T}_\eps (-t;x,v)(C):= \{y\in\R^d : |y-Y_{s}^\eps (x,-v)|<\eps \,\,\,s\in[0,t]\}.
\ee
A given trajectory $Y_{t}^\eps (x,-v)$ will meet an obstacle $c$ only if $c\in \mathcal{T}_\eps (-t;x,v)(C)$.

Whenever $| V_{-t}^\eps(x,v) |=| Z_{t}^\eps(x,-v) |< V_{\max}$, we have the following inclusion
\be
\label{tubo2}
	\mathcal{T}_\eps (-t;x,v)(C)\subset
	 \mathcal{T}_\eps (-t;x,\hat{v})(V_{\max}) 
\ee
with 
\begin{align*}
 \mathcal{T}_\eps (-t;x,\hat{v})(V_{\max}) 
         & :=\Big\{y\in\R^d : |y-(x-   V_{\max} \hat{v} s)|<\eps \,\,\,s\in[0,t]\Big\}  \\[1mm]
	 & \subset B(x,V_{\max}t).
\end{align*}
In the same way we may define the flow tube associated to the direct flow to $(X_{t}^\eps (x,v), V_{t}^\eps(x,v))$ as 
\be
\label{tubod}
\mathcal{T}_\eps (t;x,v)(C):= \{y\in\R^d : |y-X_{s}^\eps (x,v)|<\eps \,\,\,s\in[0,t]\}.
\ee

We have then:
\be
	\label{tubodir}
\begin{array}{c}
	\mathcal{T}_\eps (t;Y_{t}^\eps (x,-v),-Z_{t}^\eps (x,-v))(C)=
	\mathcal{T}_\eps (t;Y,-Z)(C) =\\
	\\
	\{y\in\R^d : |y-X_{s}^\eps (Y,-Z)|<\eps \,\,\,s\in[0,t]\}, 
\end{array}
\ee

and obviously, whenever $| V_{-t}^\eps(x,v) |=| Z_{t}^\eps(x,-v) |< V_{\max}$,
\[
\label{tubiug}
\mathcal{T}_\eps (t;Y,-Z)(C)=\mathcal{T}_\eps (-t;x,v)(C) \subset B(x,V_{\max}t)
\]
and 
\be
	\label{tubodir1}
	\mathcal{T}_\eps (t;Y,-Z)(C)\subset \mathcal{T}_\eps (t;Y,\hat{v})(V_{\max}) 
	=\mathcal{T}_\eps (-t;Y,-\hat{v})(V_{\max})
	\ee
	\[\subset B(x,V_{\max}t).
\]

These inclusions will be useful to simplify computations.

\subsection{The Markovian set $A^\eps_1(x,v)$ and the limit dynamics}
\label{markov}

For each $(x,v)\in\R^d\times\R^d$ and $t\in (0,+\infty)$, call
\be
\label{A1}
A^\eps_1(x,v)= \{C | \forall c,c'\in C s.t. (x-(0,+\infty) v)\cap B(c,\eps)\not=\varnothing \,\, \mathrm{and} 
\ee
$$
 (x-(0,+\infty) v)\cap B(c',\eps)\not=\varnothing \implies B(c,\eps)\cap B(c',\eps)=\varnothing\}.
$$
Such configurations of obstacles are the ones where the obstacles crossed by any trajectory with initial datum $(x,v)$, defined by (\ref{motoinv}), do not superimpose.

The set $A^\eps_1(x,v)$ enjoys some relevant properties for the particle system with dynamics (\ref{moto}).

\subsubsection{Irrelevance of $(A^\eps_1(x,v))^c$ when $\eps\to 0^+$}
\label{mis0}

The first relevant property of $A^\eps_1(x,v)$  is given by the following lemma.

\begin{lem}
	\label{sovrapp}
	Assume that $f_0(x,v)\in L^1_+(\R^d\times\R^d)$ with $\Supp(f_0)\subset \overline{B(0,R)}\times\overline{B(0,R)}$. 
	Then for each $\phi\in C_b(\R^d\times\R^d)$, $\forall t>0$
	\be
	\label{collmult} 
	\lim_{\eps\to 0^+} 
	\int_{\R^d\times\R^d} dxdv \phi(x,v)
	\E[f_\eps(t,x,v) \indc_{|v|>0}\indc_{(A^\eps_1)^c}]=0
	\ee
\end{lem}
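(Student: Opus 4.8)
The plan is to bound the left-hand side of \eqref{collmult} by estimating the probability of the "bad" event $(A^\eps_1(x,v))^c$, uniformly over the relevant compact set of $(x,v)$, and showing it vanishes like a positive power of $\eps$. First I would use the semiexplicit representation \eqref{evocompl}--\eqref{jac1} to rewrite
\[
\int_{\R^d\times\R^d}dxdv\,\phi(x,v)\E[f_\eps(t,x,v)\indc_{|v|>0}\indc_{(A^\eps_1)^c}]
=\int dxdv\,f_0(x,v)\,\E\big[\phi_\eps(t,x,v;C)\indc_{|v|>0}\indc_{(A^\eps_1(X^\eps_{-t},V^\eps_{-t}))^c}\big],
\]
but in fact it is cleaner to work directly from \eqref{defevomise} with the backward flow: since $\phi\in C_b$ and $f_0\in L^1_+$ with the stated support, and since by Lemma \ref{cor} the backward trajectory issued from $(x,v)\in\Supp(f_0)$ stays inside $\overline{B(0,R+Rt)}$ with speed $\le R$, the whole integrand is supported in a fixed compact set $K=\overline{B(0,R)}\times\overline{B(0,R)}$ in the $(x,v)$ variables and is bounded by $\|\phi\|_\infty f_0(x,v)$. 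Hence the quantity in \eqref{collmult} is bounded by
\[
\|\phi\|_\infty\int_K f_0(x,v)\,\bP\big((A^\eps_1(x,v))^c\big)\,dxdv,
\]
after using $\indc_{(A^\eps_1)^c}$ is a function of $C$ only once $(x,v)$ is fixed (indeed, note $A^\eps_1$ depends only on the half-line $x-(0,+\infty)v$, which is the same along the frozen direction $\hat v$).

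Next I would estimate $\bP((A^\eps_1(x,v))^c)$. The event $(A^\eps_1(x,v))^c$ is the event that there exist two Poisson points $c,c'\in C$, each within distance $\eps$ of the half-line $x-(0,+\infty)v$ (restricted, thanks to the support and speed bounds, to a segment of length $\le R$ — or more safely $\le R+Rt$), whose $\eps$-balls overlap, i.e. $|c-c'|<2\eps$. By the flow-tube inclusion \eqref{tubo2}--\eqref{tubodir1}, any obstacle met by the backward trajectory lies in a tube of radius $\eps$ about a segment of length $O(1)$, hence in a region of Lebesgue measure $O(\eps^{d-1})$. The expected number of Poisson points in such a tube is $\l_\eps\cdot O(\eps^{d-1})=O(1)$. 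The probability that two of them lie within $2\eps$ of each other is then estimated by a second-moment (Mecke / Campbell) computation for the Poisson point process: it is bounded by
\[
\tfrac12\,\l_\eps^2\int\int \indc_{c\in\text{tube}}\,\indc_{c'\in\text{tube}}\,\indc_{|c-c'|<2\eps}\,dc\,dc'
\le C\,\l_\eps^2\cdot\eps^{d-1}\cdot\eps^{d}
= C\,\l^2\,\eps^{\,(d-1) - 2(d-1) + d}
= C\,\l^2\,\eps,
\]
since for fixed $c$ in the tube the inner integral over $c'$ with $|c-c'|<2\eps$ contributes $O(\eps^d)$ and the outer integral over $c$ in the tube contributes $O(\eps^{d-1})$. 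Thus $\bP((A^\eps_1(x,v))^c)\le C\eps$ uniformly for $(x,v)\in K$ (the constant $C$ depends on $R$, $t$, $d$, $\l$ but not on $(x,v)$ or $\eps$).

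Finally I would combine: the quantity in \eqref{collmult} is $\le \|\phi\|_\infty\,\|f_0\|_{L^1}\,C\eps\to 0$ as $\eps\to 0^+$, which is the claim. The main obstacle, and the step deserving the most care, is the uniformity in $(x,v)$ of the tube estimate together with correctly tracking the powers of $\eps$: one must be sure that the relevant portion of the half-line $x-(0,+\infty)v$ along which obstacles can actually be crossed has length bounded uniformly on $\Supp(f_0)$ (which follows from Lemma \ref{cor} and the speed monotonicity $|V^\eps_{-s}|\le|v|\le R$, giving a segment of length $\le Rt$), and that the Poisson second-moment bound is applied with the correct intensity $\l_\eps=\l\eps^{-(d-1)}$ so that the net power of $\eps$ is strictly positive. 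A secondary point to make explicit is the measurability of $(x,v)\mapsto\bP((A^\eps_1(x,v))^c)$ and the legitimacy of pulling the expectation through the $dxdv$ integral, which is just Fubini--Tonelli since everything in sight is nonnegative.
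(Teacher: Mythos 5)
Your proposal is correct and follows essentially the same route as the paper: both arguments reduce the claim to the observation that the configurations with two overlapping obstacles near the (direction-frozen) trajectory line carry total weight $O(\eps)$, via a second factorial moment computation for the Poisson process in a tube of volume $O(\eps^{d-1})$ with intensity $\l\eps^{-(d-1)}$. The only cosmetic difference is that you bound $\|\phi\|_\infty$ and work in the Lagrangian formulation \eqref{defevomise} to avoid the Jacobian, whereas the paper keeps $J_{\eps,C}$ and removes it by the change of variables $(x,v)\mapsto(Y,Z)$ before integrating out the obstacle centres.
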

             
\begin{proof}

We have a.e. in $\R^d\times\R^d$
\begin{align*}
\E[f_\eps(t,x,v) & \indc_{|v|>0}  \indc_{(A^\eps_1)^c}] \\
       & \leq  \sum_{M\geq 0}\frac{\l_{\eps}^M}{M!}\int dc_1\ldots dc_M e^{-\l_{\eps}|B(x,Rt)|} \\
       & \times  \prod_{j=1, j\neq 2}^{M}\indc_{B(x,R t)}(c_j) \indc_{B(c_1,2\eps)}(c_2)\indc_{|v|>0} \\[2mm]
       & \times f_0((Y_{t}^\eps (x,-v),-Z_{t}^\eps(x,-v))) 
       J_{\eps,C}(t,x,v).
\end{align*}

Therefore, because of definition (\ref{tubo}) and inclusion (\ref{tubodir1}), 
\begin{align*}
   \int dv dx \, & \phi(x,v)\E[f_\eps(t,x,v)\indc_{|v|>0} \indc_{(A^\eps_1)^c}] \\
   & \leq  \sum_{M\geq 0}\frac{ \l_{\eps}^M}{M!}\int dYdZ 
      \int dc_1\ldots dc_M e^{-\l_{\eps}|\mathcal{T}_\eps (t;Y,\hat{v})(R)|}  \\[1mm]
    & \times \prod_{j=1, j\neq 2}^{M}\indc_{\mathcal{T}_\eps (t;Y,\hat{v})(R)}(c_j) 
        \indc_{B(c_1,2\eps)}(c_2)\indc_{|v|>0}  \\[2mm]
    & \times f_0(Y, Z) \phi\big(X^\eps_t(Y,Z),V^\eps_t(Y,Z)\big),
\end{align*}
where we have performed the change of variable $(x,v)\to (Y,Z)=(Y_{t}^\eps (x,-v), -Z_{t}^\eps(x,-v))$, with Jacobian $J_{\eps,C}^{-1}$.

We have then 
\begin{align*}
	\int dv dx \, & \E[f_\eps(t,x,v)\indc_{|v|>0} \indc_{(A^\eps_1)^c}] \\
	& \leq  \|\phi\|_\infty	\l_\eps |B(0,2\eps)|\int dYdZ f_0(Y, Z)  \\
	& \leq \lambda \eps 2^d B^{d} \|\phi\|_\infty 
             \int dYdZ	f_0(Y, Z) \stackrel{\eps\to0}{\to}0
\end{align*}
and the lemma is proved.
\end{proof}
	
	We have then that $\E[f_\eps(t,x,v) \indc_{|v|>0}\indc_{(A^\eps_1)^c}]\deb 0$, and therefore $A^\eps_1 (x,v)$ is the only set of configurations of obstacles which play a role in the computation of  the weak limit of $\E[f_\eps(t,x,v) \indc_{|v|>0}]$ when $\eps\to 0^+$.
	
	\bigskip
	
	\subsubsection{The dynamics on $A^\eps_1(x,v)$}
	A second property of $A^\eps_1(x,v)$ is that,
	when $C\in A^\eps_1(x,v)$, the motion described by (\ref{motoinv}) is quite simplified.
	
	Within this set, the trajectory starting at $(x,-v)$ will cross  $N=N(t,Y_{t}^\eps (x,-v),-Z_{t}^\eps(x,-v);C)$ single obstacles, which can be ordered according to their collision times; therefore $\tau^j_{i,+}(x,v)$ and $\tau^j_{i,-}(x,v)$ are resp. the entry and the exit time of the trajectory in and from the $j$-th obstacle.
	
	For $C\in A^\eps_1(x,v)$ such that $C\bigcap \mathcal{T}_\eps (-t;x,v)(C)=(c_1, \ldots, c_N)$, where $c_1, \ldots, c_N$  are ordered according to their collision times, we call:

\begin{align}
[C]_N & \!= \! (c_1,\ldots,c_N) \nonumber \\
	  & \!=\! \left\{C' \!\!\in\! A^\eps_1(x,v) \!: \exists i_1,\ldots,i_N \!\in\! \N \,\,
	  s.t.\,\,(c'_{i_1}, \ldots, c'_{i_N}) \!=\! (c_1,\ldots,c_N) \right\} \! .
	  \label{cleq}
\end{align}
	
	We call $$[A^\eps_1(t,x,v)]_N = \left\{[C]_N\, \mathrm{defined}\,\mathrm{by} (\ref{cleq})\right\}$$
	and define the set of ordered obstacles possibly met by the trajectories (\ref{motoinv}) up to time $t$ by
	$$
	[A^\eps_1(t,x,v)]=\bigcup\limits_{N=0}^{\infty}[A^\eps_1(t,x,v)]_N
	$$
	
	Once again, since ${\bf P}(x\in B_\eps(c))<\eps$, without loss of generality, we shall assume $x\in Z_\eps[C]$.
	
	For a given configuration of ordered obstacles $(c_1,\ldots, c_{N})\in [A^\eps_1(t,x,v)]_{N} $ we define $n_j(x,v)=(n_{j,x_1},n_{j,x_2},\ldots,n_{j,x_d})$ as the unit vector such that
	$$
	c_j + \eps n_j= Y_{\tau^j_{i,+}}^\eps (x,-v)
	$$
	Then, choosing $\hat{v}$ as the unit vector of the first coordinate axis, we define the \textit{impact parameter} of the trajectory with the $j$-th obstacle as
	$$
	h_j =\sqrt{\sum_{i=2}^{d} n_{j,x_i}^2} .
	$$	
	The solution to (\ref{motoinv}) on the set of configurations of obstacles
	$$
	[A^\eps_1(t,x,v)]_N\subset [A^\eps_1(t,x,v)] 
	$$
	can be expressed as:
	\begin{eqnarray}
		Y_{t}^\eps (x,-v) \!\!\!&\!=\!&\!\!\! x \!+\! \sum_{j=0}^{N-1}
		Z^\eps_{\tau^{j+1}_{i,+}}(x,-v)(\tau^{j+1}_{i,+}-\tau^{j}_{i,-})+
		Z^\eps_{\tau^{N+1}_{i,+}}(x,-v)(t-\tau^{N}_{i,-})
		\nonumber \\
		&-&2\eps \sum_{j=1}^{N-1}\sqrt{1-h^2_j} \hat{v}
		-|Y_{\tau^{N}_{i,-}}^\eps (x,-v)-Y_{\tau^{N}_{i,+}}^\eps (x,-v)|\hat{v}
		\label{flussoinda1} \\
		Z_{t}^\eps(x,-v) \!\!\!&\!=\!&\!\!\! -\left(v+\hat{v}\sum_{j=1}^{N}
		\frac{\ka}{\eps}\int_{\tau^{j}_{i,+}}^{\tau^{j}_{i,-}}S(|Z|(s))|Z(s)|ds\right)
		\nonumber
	\end{eqnarray}
	with $|Y_{\tau^{N}_{i,-}}^\eps (x,-v)-Y_{\tau^{N}_{i,+}}^\eps (x,-v)|<2\eps. $

	Within $[A^\eps_1(t,x,v)]_N$, calling $n_j=(\sqrt{1-\nu_j^2}, \nu_{j})$ we may perform the change of variable
	\be
	\label{cambvar}
	(c_1\ldots,c_N)\to (\nu_1, \tau^{1}_{i,+},\ldots, \nu_N, \tau^{N}_{i,+}),
	\ee
	where
	$$c_j= x+\sum_{k=0}^{j-1}
	Z^\eps_{\tau^{j+1}_{i,+}}(\tau^{j+1}_{i,+}-\tau^{k}_{i,-})
	-2\eps\sum_{k=1}^{j-1}\sqrt{1-h^2_k}\hat{v}	-\eps n_j .
	$$
	The Jacobian matrix of (\ref{cambvar}) is a block triangular matrix, with
	Jacobian 
	\begin{eqnarray}
		J_{ct}(\nu_1, \tau^{1}_{i,+},\ldots, \nu_N, \tau^{N}_{i,+})= \prod_{j=1}^{N}\eps^{d-1}|Z^\eps_{\tau^{j}_{i,+}}|
	\end{eqnarray}
and each set $[A^\eps_1(t,x,v)]_N$ is expressed, in the new coordinates, as
\\
\begin{align}
[A^\eps_1(t,x,v)]_N = \Big\{ & (\nu_1, \tau^{1}_{i,+},\ldots, \nu_N, \tau^{N}_{i,+})\!  \in \! (\R^{d-1} \! \times \! (0,t])^N\!:
       \nonumber  \\
      & \quad 0<\tau^{1}_{i,+}<\ldots<\tau^{N}_{i,+}\leq t, \;\; |\nu_j|\leq 1,\;\;  j=1,\ldots,N \Big\}.
      \label{parttubo2}
\end{align}	
The size of the flow tube associated to the trajectory (\ref{flussoinda1}) is
\begin{align}
|\mathcal{T}_\eps (-t;x,v)(c_1\ldots c_N)
              & = B^{d-1}\eps^{d-1} \bigg[\sum_{j=0}^{N}
		|Z|^\eps_{\tau^{j+1}_{i,+}}(\tau^{j+1}_{i,+}-\tau^{j}_{i,-})
		\nonumber \\
		& + 2 \eps \! \sum_{j=1}^{N-1} \! \sqrt{1-h^2_j}
		+ \Big| Y_{\tau^{N}_{i,-}}^\eps (x,-v)-Y_{\tau^{N}_{i,+}}^\eps (x,-v) \Big| \bigg] .
		\label{tubo1coll}
\end{align}
	Finally, thanks to (\ref{motoinv}) , we may make explicit, at each complete crossing, the change in velocity, Indeed,
	\be
	\label{velimpl}
	\int_{|Z|^\eps_{\tau^{j}_{i,+}}}^{|Z|^\eps_{\tau^{j}_{i,-}}}\frac{du}{S(u)}=\frac{\ka}{\eps}|Y_{\tau^{j}_{i,-}}^\eps (x,-v)-Y_{\tau^{j}_{i,+}}(x,-v)|=2\ka\sqrt{1-h_j^2}.
	\ee
	
	Defining, for $z\in(0,+\infty)$, the invertible (because $S(u)>0$)
	function
	$$
	a(z)=\int_{0}^{z} du \frac{1}{S(u)},
	$$
	we may write (\ref{velimpl}) as
	$$
	a(|Z|(\tau^{j}_{i,-}))-a(|Z|(\tau^{j}_{i,+}))=2\ka\sqrt{1-h_j^2}
	$$
	and express the exit velocity from the obstacle $j$ as
	\be
	\label{velusc}
	|Z|^\eps_{\tau^{j}_{i,-}}=a^{-1}[a(|Z|^\eps_{\tau^{j}_{i,+}})+2\ka\sqrt{1-h_j^2}].
	\ee
	where, in particular, 
	\be
	\label{velusc1}
	|Z|^\eps_{\tau^{1}_{i,-}}=a^{-1}\left[a(|v|)+2\ka\sqrt{1-h_j^2}\right].
	\ee
	From (\ref{velusc}-\ref{velusc1}) we see that $|Z|^\eps_{\tau^{j}_{i,-}}$ is independent of $\eps$.
	
	\begin{remark}
		The independence of the exit velocity of $\eps$ follows from the choice $\ka_\eps =\frac{\ka}{\eps}$, i.e.  the increment in the velocity is inversely proportional to the size of the obstacles.
	\end{remark} 
	
	\subsubsection{The limit dynamics on $[A^\eps_1(t,x,v)]$}
	\label{sdinlim}
	\smallskip
	
	From the formulas given in the previous subsection, it is easy to compute the limit dynamics on the relevant set of obstacles.
	
	By (\ref{velusc}-\ref{velusc1}) and $S(u)>S_0$ we may estimate $\tau^{j}_{i,-}-\tau^{j}_{i,+}$ as follows
	\be
	\label{tempicoll}
	\tau^{j}_{i,-}-\tau^{j}_{i,+}=\frac{\eps}{\ka}\int_{|Z|(\tau^{j}_{i,+})}^{|Z|(\tau^{j}_{i,-})}\frac{du}{u S(u)}
	\leq \frac{\eps}{\ka S_0}\ln\left[\frac{|Z|^\eps_{\tau^{j}_{i,-}}}{|Z|^\eps_{\tau^{j}_{i,+}}}\right]
	\ee
	so that
	\be
	\lim_{\eps\to 0^+}	(\tau^{j}_{i,-}-\tau^{j}_{i,+})=0.
	\ee
	
	When $\eps\to 0^+$, we obtain therefore the limit dynamics
	\begin{eqnarray}
		\label{dinlim1}
		Y_{t}^\eps(t,-v) &\to&	Y_t(x,-v)=x+\sum_{j=0}^{N}
		Z^-_{t^{j+1}_{i,+}}(t^{j+1}_{i,+}-t^{j}_{i,+}) \\
		\label{dinlim2}
		Z^\eps_t(x,-v) &\to&Z_t(x,-v)=-v+\sum_{j=1}^{N}\left[Z^+ _{t^{j}_{i,+}}-Z^-_{t^{j}_{i,+}}\right],
	\end{eqnarray}
	where 
	$$
	t^{j}_{i,+}=\lim_{\eps\to 0^+}\tau^{j}_{i,+}
	$$
	and 
	(for a.a. configurations of obstacles)
	$$
	Z^+ _{t^{j}_{i,+}}=-a^{-1}\big[a(|Z^-_{t^{j}_{i,+}}|)+2k\sqrt{1-h_j^2})\big] \hat{v}
	$$
(we note that $\displaystyle Z^\pm_{t^{j}_{i,+}} \!\!=\!\! \lim_{t\to (t^{j}_{i,+})^\pm} \! Z^\eps_t$, recalling that 
$\displaystyle \lim_{\eps\to 0^+}Z^\eps_{\tau^{j}_{i,\pm}}(x,\!-v)) \!=\!\!  \lim_{t\to (t^{j}_{i,+})^\mp} \! Z^\eps_t$).

More precisely, we have 
	$$
	|Y_{t}^\eps (x,-v)-Y_{t}(x,-v)|
	\leq \eps \left[\frac{|Z|_{\max}}{\ka S_0}
	\ln \left(\frac{|Z|_{\max}}{|v|}\right)+2N \right]
	$$
	and
	$$
	|Z_{t}^\eps (x,-v)-Z_{t}(x,-v)|= \left|\sum_{j=1}^{N}\left[Z^\var_{\tau^{j}_{i,-}}-Z^\eps_{\tau^{j}_{i,+}}-\left(Z^+ _{t^{j}_{i,+}}-Z^-_{t^{j}_{i,+}}\right)\right]\right|
	=0
	$$

	Thanks to Lemma \ref{sovrapp} and the considerations given before, we may now study the behaviour of the limit points of  $\E[f_\eps(t,x,v) \indc_{A^\eps_1(x,v)}]$ when $\eps\to 0^+$ on $\R^d\times\R^d\setminus \{0\}$ (or equivalently the behaviour of $\E[f_\eps(t,x,v) \indc_{|v|>0} \indc_{A^\eps_1(x,v)}]$).

	\subsection{Limit of $\E[f_\eps(t,x,v)\indc_{|v|>0} \indc_{A^\eps_1}]$}

	We are now ready to prove Theorem \ref{TeorPrinc}, through the following auxiliary theorem.
	
	\begin{thm}
		\label{EqLim1}
		Assume $f_0\in C(\R^d\times\R^d)$, with compact support and $f_0\geq 0$, and let $f_\eps (t,x,v;C)$ be the solution to (\ref{evomisura}) with initial datum $f_0$
		Then, for $t\in(0,+\infty)$ on $\R^d\times\R^d\setminus\{0\}$, 
		
		$$
		\mu_\eps = \E[f_\eps (t,x,v)]dxdv \stackrel{*}{\deb} \mu=F(t,x,v)dxdv
		$$
		where $F$ solves the equation
		\be
		\label{eqlim}
		\pa_t F + v\cdot\grad_x F = \si\int_{0}^{1}dh^{d-1} \frac{|z|^d}{|v|^{d-1}}\frac{S(|z|)}{S(|v|)}F(t,x,z) \indc_{|v|>0}-\si |v| F(t,x,v) 
		\ee
	in $\cD'((0,+\infty)\times\R^d\times\R^d\setminus\{0\})$, with $z=a^{-1}\left(a(|v|)+2 \ka\sqrt{1-h^2}\right)\hat{v}$ and $\si=\lambda B^{d-1}$
	\end{thm}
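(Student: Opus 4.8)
The plan is to follow the Gallavotti-type strategy of \cite{Gallavotti}: expand the expectation in the semiexplicit formula (\ref{va1}) over the Poisson configuration of centres, reduce to non-overlapping configurations using Lemma \ref{sovrapp}, perform the change of variables (\ref{cambvar}) from obstacle centres to impact parameters and crossing times, pass to the limit $\eps\to0^+$ term by term, and recognise the resulting series as the Duhamel expansion of the mild solution of (\ref{eqlim}). Concretely, starting from $F_\eps(t,x,v)=\E[f_0((Y^\eps_t,-Z^\eps_t)(x,-v))\,J_{\eps,C}(t,x,v)\indc_{|v|>0}]$ and dropping $\indc_{(A^\eps_1)^c}$ thanks to Lemma \ref{sovrapp}, I would write $F_\eps(t,x,v)\indc_{A^\eps_1}$ as $\sum_{N\ge0}$ of integrals over $[A^\eps_1(t,x,v)]_N$, each carrying the Poisson weight $\tfrac{\l_\eps^N}{N!}e^{-\l_\eps|\mathcal{T}_\eps(-t;x,v)(c_1,\dots,c_N)|}$, the factor $N!$ being absorbed by the sum over orderings of the crossed obstacles. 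The change of variables (\ref{cambvar}), whose block-triangular Jacobian is $\prod_{j=1}^N\eps^{d-1}|Z^\eps_{\tau^j_{i,+}}|$, turns the $N$-th term into an integral over $(\nu_j,\tau^j_{i,+})$ in the domain (\ref{parttubo2}); the factors $\l_\eps=\l/\eps^{d-1}$ precisely absorb the $\eps^{d-1}$ of each Jacobian block, and the angular integrals, via $\int_{B^{d-1}(0,1)}g(|\nu|)\,d\nu=B^{d-1}\int_0^1 g(h)\,dh^{d-1}$, generate the constant $\sigma=\l B^{d-1}$ and the measure $dh^{d-1}$.

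Next I would pass to the limit $\eps\to0^+$ in each term. By (\ref{tempicoll}) the crossing times satisfy $\tau^j_{i,-}-\tau^j_{i,+}\to0$, so the backward flow converges to the limit dynamics (\ref{dinlim1})--(\ref{dinlim2}), the exponent $\l_\eps|\mathcal{T}_\eps|$ converges (using the tube size (\ref{tubo1coll})) to $\sigma\sum_{j=0}^N|z_j|(t^{j+1}_{i,+}-t^j_{i,+})$, with the convention $t^0_{i,+}=0$, $t^{N+1}_{i,+}=t$, $z_0=v$, and $|z_j|=a^{-1}(a(|z_{j-1}|)+2\ka\sqrt{1-h_j^2})$ by (\ref{velusc}), while the transport Jacobian (\ref{jac1}) converges to $(|z_N|/|v|)^d\,S(|z_N|)/S(|v|)$. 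The compact support from Lemma \ref{cor}, together with $S\ge S_0$ and the resulting a priori velocity bound, provide the domination needed to interchange $\lim_{\eps\to0}$ with $\sum_N$ and with the $(\tau,\nu)$ integrations.

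Collecting the surviving factors, the limit of the $N$-th term is $\sigma^N$ times an iterated integral in which collision $j$ contributes the kernel $h_j\mapsto\frac{|z_j|^d}{|z_{j-1}|^{d-1}}\frac{S(|z_j|)}{S(|z_{j-1}|)}$ — here the extra speed factor $|Z^\eps_{\tau^j_{i,+}}|\to|z_{j-1}|$ coming from the Jacobian of (\ref{cambvar}) is exactly what lowers the denominator power from $d$ (as it appears in (\ref{jac1})) to $d-1$ — with free transport between consecutive collisions damped by $e^{-\sigma|z_j|(\cdot)}$. This is precisely the $N$-th term of the Duhamel series for the mild solution $F$ of (\ref{eqlim}) with datum $f_0$, and a standard bound on that series (geometric in $\sigma R t$, using finite speeds and compact support) gives uniform convergence on $[0,T]$; hence $F_\eps(t,x,v)\indc_{|v|>0}\to F(t,x,v)$ in $L^1_{loc}(\R^d\times\R^d\setminus\{0\})$, so $\mu_\eps\stackrel{*}{\deb}\mu$ on $\R^d\times\R^d\setminus\{0\}$. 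Finally, differentiating the mild formula (Fubini plus the fundamental theorem of calculus) shows that $F$ solves (\ref{eqlim}) in $\cD'((0,+\infty)\times\R^d\times\R^d\setminus\{0\})$.

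I expect the main obstacle to be twofold. First, the bookkeeping that produces exactly the exponents $d$ in the numerator and $d-1$ in the denominator of the gain kernel: this hinges on correctly combining the transport Jacobian (\ref{jac1}) with the additional speed factor in the Jacobian of the change of variables (\ref{cambvar}), and on tracking that the extra powers of $\l_\eps$ exactly cancel the $\eps^{d-1}$ of each Jacobian block. Second, obtaining domination uniform in $N$ so as to exchange the $\eps\to0$ limit with the infinite Poisson sum and the time-ordered integrations. One must also keep careful track of the velocity reversal built into the backward flow and of the fact that the speed only increases along it, so that $|z|\ge|v|$ and every occurrence of $a^{-1}$ is well defined.
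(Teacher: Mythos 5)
Your proposal follows essentially the same route as the paper's proof: reduction to the non-overlapping configurations via Lemma \ref{sovrapp}, Poisson expansion in the number $N$ of crossed obstacles, the change of variables (\ref{cambvar}) whose Jacobian $\prod_j\eps^{d-1}|Z^\eps_{\tau^j_{i,+}}|$ cancels the $\l_\eps^N$ divergence and supplies the extra speed factor turning $(|z|/|v|)^d$ into $|z|^d/|v|^{d-1}$, passage to the limit dynamics (\ref{dinlim1})--(\ref{dinlim2}), and identification of the resulting series (\ref{Asp4}) as the Duhamel expansion of the solution to (\ref{eqlim}). You even make explicit the two bookkeeping points (the exponent $d-1$ and the uniform-in-$N$ domination) that the paper treats tersely, so there is nothing to correct.
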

	
	\begin{proof}
	Because of Lemma \ref{sovrapp}, 
	$$
	\E[f_\eps (t,x,v)\indc_{|v|>0}]=\E[f_\eps (t,x,v)\indc_{|v|>0}\indc_{A^\eps_1(x,v)}]+O(\eps).
	$$
	We write then, for $x\in Z_\eps[C]$, 
\begin{align}
\E [f_\eps & (t,x,v) \indc_{|v|>0}\indc_{A^\eps_1(x,v)}]
                 \nonumber \\
                 = & \sum_{N\geq 0}\frac{\l_{\eps}^N}{N!} \int dc_1\ldots dc_N \,
                 \indc_{|v|>0} \indc_{A^\eps_1(x,v)} \, e^{-\l_{\eps}|\mathcal{T}_\eps (-t;x,v)(C)|(c_1,\ldots,c_N)}
                  \nonumber \\
		&  \indc_{C\bigcap \mathcal{T}_\eps (-t;x,v)(C) 
		= (c_1, \ldots, c_N)} f_0((Y_{t}^\eps (x,\!-v), \! -Z_{t}^\eps(x,-v)))		
		J_{\eps,C}(t,x,v)
		\label{Asp}		
\end{align}
which, because of the considerations in Section \ref{markov}, can be rewritten as

\begin{eqnarray}
\lefteqn{\E[f_\eps (t,x,v)\indc_{|v|>0}\indc_{A^\eps_1}]=
		\sum_{N\geq 0}\l_{\eps}^N \int dc_1\ldots dc_N \indc_{[A^\eps_1(t,x,v)]_N}([C]_N)}
		\nonumber \\
		&&
		\times \exp\biggl\{-\lambda B^{d-1}\bigg[\sum_{j=0}^{N}
		|Z^\eps_{\tau^{j+1}_{i,+}}|(\tau^{j+1}_{i,+}-\tau^{j}_{i,-})+O(\eps)\bigg]\biggr\}
		\nonumber \\[2mm]
		&&
		\times f_0 \Bigg( \! x+\sum_{j=0}^{N}
		Z(\tau^{j+1}_{i,+})(\tau^{j+1}_{i,+}-\tau^{j}_{i,-})
		+O(\eps) \, ,
		\nonumber \\
		& & \hspace*{3cm} v+\hat{v}\sum_{j=1}^{N}
		\frac{\ka}{\eps}\int_{\tau^{j}_{i,+}}^{\tau^{j}_{i,-}}S(|Z|(s))|Z(s)| ds \Bigg)
		 \\
		&& 
		\times \left(\frac{\displaystyle |v|+\sum_{j=1}^{N}
			\frac{ \ka}{\eps}\int_{\tau^{j}_{i,+}}^{\tau^{j}_{i,-}}S(|Z|(s))|Z(s)|ds)}{|v|}\right)^{\!\!d}
			\nonumber \\[2mm]
		& & 
		\times \, \frac{ \displaystyle S\left(|v|+\sum_{j=1}^{N}
			\frac{\ka}{\eps}\int_{\tau^{j}_{i,+}}^{\tau^{j}_{i,-}}S(|Z|(s))|Z(s)|ds)\right)}{S(|v|)} \, .
		\nonumber
\end{eqnarray}
By the change of variables (\ref{cambvar}), and using the relations
$$
\label{prodotto}
	\frac{|Z^\eps_{\tau^{N}_{i,-}}|}{|v|}=\prod_{j=1}^{N} \; \frac{ \, \Big| Z^\eps_{\tau^{j}_{i,-}} \Big| \, }{ \Big|Z^\eps_{\tau^{j}_{i,+}} \Big|}
$$
	and
	$$ 
	\frac{ \, S(|Z^\eps_{\tau^{N}_{i,-}}|) \, }{S(|v|)}
	= \prod_{j=1}^{N} \, \frac{S(|Z^\eps_{\tau^{j}_{i,-}}|)}{S(|Z^\eps_{\tau^{j}_{i,+}}|)},
	$$
we get (recall that $\l_\eps=\frac{\l}{\eps^{d-1}}$)

\begin{eqnarray}
\lefteqn{\E[f_\eps (t,x,v)\indc_{|v|>0}\indc_{A^\eps_1}] }
	\nonumber \\
	&& =
	\sum_{N\geq 0}\l^N\int_{0}^{t}d\tau^{N}_{i,+}
	\int_{\nu_N^2\leq 1} d\nu_N \ldots \int_{0}^{\tau^{2}_{i,+}}d\tau^{1}_{i,+}
	\int_{\nu_1^2\leq 1} d\nu_1
	\nonumber \\
	&& \times \prod_{j=1}^{N}\left[|Z^\eps_{\tau^{j}_{i,+}}|
	\left(\frac{|Z^\eps_{\tau^{j}_{i,-}}|}{|Z^\eps_{\tau^{j}_{i,+}}|}\right)^d 
	\frac{S(|Z^\eps_{\tau^{j}_{i,-}}|)}{S(|Z^\eps_{\tau^{j}_{i,+}}|)}\right]
	\label{Asp2}  \\
	&& \times \exp\biggl\{-\lambda B^{d-1}\big[\sum_{j=0}^{N}
	|Z^\eps_{\tau^{j+1}_{i,+}}|(\tau^{j+1}_{i,+}-\tau^{j}_{i,-})+O(\eps)\big]\biggr\}
	\nonumber \\
	&&
	\times f_0 \Bigg(x+\sum_{j=0}^{N}
	Z(\tau^{j+1}_{i,+})(\tau^{j+1}_{i,+}-\tau^{j}_{i,-})
	+O(\eps) \, ,
	\nonumber \\
	&& \hspace*{2cm}
	v+\hat{v}\sum_{j=1}^{N}
	\frac{\ka}{\eps}\int_{\tau^{j}_{i,+}}^{\tau^{j}_{i,-}}S(|Z|(s))|Z(s)|ds) \Bigg)
	\nonumber
\end{eqnarray}
or equivalently
\begin{eqnarray}
\lefteqn{\E[f_\eps (t,x,v)\indc_{|v|>0}\indc_{A^\eps_1}]} 
	\nonumber \\
	&& =
	\sum_{N\geq 0}(\l B^{d-1})^N
	\int_{0}^{t}d\tau^{N}_{i,+}
	\int_{0}^1 dh^{d-1}_N \ldots \int_{0}^{\tau^{2}_{i,+}}d\tau^{1}_{i,+}
	\int_{0}^1 dh^{d-1}_1 
	\nonumber \\
	&& \times \prod_{j=1}^{N}\left[|Z^\eps_{\tau^{j}_{i,+}}|
	\left(\frac{|Z^\eps_{\tau^{j}_{i,-}}|}{|Z^\eps_{\tau^{j}_{i,+}}|}\right)^d 
	\frac{S(|Z^\eps_{\tau^{j}_{i,-}}|)}{S(|Z^\eps_{\tau^{j}_{i,+}}|)}\right]
	\label{Asp3} \\
	&& \times \exp\biggl\{-\lambda B^{d-1}\big[\sum_{j=0}^{N}
	|Z^\eps_{\tau^{j+1}_{i,+}}|(\tau^{j+1}_{i,+}-\tau^{j}_{i,-})+O(\eps) \big]\biggr\}
	\nonumber \\
	&&
	f_0 \Bigg( x-\sum_{j=0}^{N}
	Z(\tau^{j+1}_{i,+})(\tau^{j+1}_{i,+}-\tau^{j}_{i,-})
	+O(\eps) \, ,
	\nonumber \\
	&& \hspace*{2cm}
	v+\hat{v}\sum_{j=1}^{N}
	\frac{\ka}{\eps}\int_{\tau^{j}_{i,+}}^{\tau^{j}_{i,-}}S(|Z|(s))|Z(s)|ds \Bigg) .
	\nonumber
	\end{eqnarray}
We may now perform the limit, using (\ref{dinlim1})-(\ref{dinlim2}) and recalling that ${\bf P}(x\in B_\eps(c))<\eps$,
so that, for all $\phi \in C_c(\R^d\times\R^d\setminus\{0\})$,
we have
\begin{eqnarray*}
\lefteqn{ \lim_{\eps\to 0^+}\int_{\R^d\times\R^d}dxdv \, \phi(x,v) \, \E[f_\eps (t,x,v)]}
	\nonumber \\
        &=& \lim_{\eps\to 0^+}\int_{\R^d\times\R^d} \, dxdv \, \phi(x,v) \, \E[f_\eps (t,x,v)\indc_{|v|>0}\indc_{A^\eps_1}]
	\nonumber \\
        &=& \int_{\R^d\times\R^d}dxdv\phi(x,v)F(t,x,v),
\end{eqnarray*}
where 
			
\begin{eqnarray}
\lefteqn{ F(t,x,v)\!=\!
				\sum_{N\geq 0}(\l B^{d-1})^N \!\!
				\int_{0}^{t}d\tau^{N}_{i,+}
				\int_{0}^1 dh^{d-1}_N \ldots \int_{0}^{\tau^{2}_{i,+}}d\tau^{1}_{i,+}
				\int_{0}^1 dh^{d-1}_1 } 
				\nonumber \\
		&& \times \prod_{j=1}^{N}\left[|Z^-_{\tau^{j}_{i,+}}| \left(\frac{|Z^+_{\tau^{j}_{i,+}}|}{|Z^-_{\tau^{j}_{i,+}}|}\right)^d 
		      \frac{S(|Z^+_{\tau^{j}_{i,+}}|)}{S(|Z^-_{\tau^{j}_{i,+}}|)}\right]
		      \label{Asp4}  \\
		&& \times \exp\biggl\{-\lambda B^{d-1}\big[\sum_{j=0}^{N}
				|Z^-_{\tau^{j+1}_{i,+}}|(\tau^{j+1}_{i,+}-\tau^{j}_{i,+})\big]\biggr\}
				\nonumber \\
		&& \times f_0 \Bigg( x+\sum_{j=0}^{N}
				Z^-_{\tau^{j+1}_{i,+}}(\tau^{j+1}_{i,+}-\tau^{j}_{i,+}) \, ,
				    v+\sum_{j=1}^{N}\big[Z^+ _{\tau^{j}_{i,+}}-Z^-_{\tau^{j}_{i,+}} ] \Bigg)
				\nonumber
\end{eqnarray}
is the series solution to (\ref{eqlim}).
\end{proof}

\begin{proof}[Proof of Theorem \ref{TeorPrinc}]
	
	\phantom{Pippo}
	
	\smallskip
	
	\noindent
	We proved in Theorem \ref{EqLim1} that $\E[f_\eps(t,x,v;C)]dxdv\stackrel{*}{\deb}F(t,x,v)dxdv$ on $\R^d\times\R^d\setminus\{0\}$, where $F(t,x,v)$ solves (\ref{eqlim}) in $\cD'((0,+\infty)\times\R^d\times\R^d\setminus\{0\})$.
	 Therefore the limit point $F(t,dxdv)dt$ s.t. $F(t,dxdv)dt|_{(0,+\infty)\times\R^d\times\R^d\setminus\{0\}}=F(t,x,v)dtdxdv$ solves equation (\ref{equalimn0}) with 
	 $$m_F =\si\int_{0}^{1}dh^{d-1} \frac{|z|^d}{|v|^{d-1}}\frac{S(|z|)}{S(|v|)}F(t,x,z) \indc_{|v|>0} dxdv
	 $$ and 
	$$\ell(v) = \si|v|.$$
	
	Since the hypotheses 
	 in Lemma \ref{L-MassCons2} are verified, we have:
	\be
	\label{eqptolim3}
	(\pa_t+v\cdot\grad_x+\ell(v))F-m_F=\l_F(t,x)\de_0(dv)dx\,.
	\ee
 in $\cD'((0,+\infty)\times\R^d\times\R^d)$.
	
	Since the jacobian of the change of coordinate $z\to v$ is
	$$
	J'= \left(\frac{|z|}{|v|}\right)^{d-1}\frac{S(|z|)}{S(|v|)}
	$$
	and $|v|>0 \implies |z|>a^{-1}(2\ka\sqrt{1-h^2})$, $\l_F(t,x)$ is given by 
	\begin{align*}
	\l_F(t,dx) & =\int_{\R^d}\ell(v)F(t,dxdv) - \int_{\R^d}m_F(t,dxdv) 
	               \nonumber\\
	               & = \int_{\R^d} dz \int_{0}^{1} dh^{d-1}|z| F(t,x, z)\indc_{|z|\leq a^{-1}(2\ka\sqrt{1-h^2})} dx.
	\end{align*}	
	
	Since the solution to (\ref{eqlim2}) is unique by Lemma \ref{unic} and the weak-* limit of $\E[f_\eps(t,dxdv;C)]dt$ is also a weak limit, Theorem \ref{TeorPrinc} is proved.
	
\end{proof}

This concludes our description of the Boltzmann--Grad asymptotics for the particle dynamics described by (\ref{moto}).

\bigskip 

Finally, we have a few remarks.

The first one is that, although the dynamics (\ref{moto}) at $\eps>0$ induces an evolution on the Radon measures which preserves mass, in the limit mesoscopic evolution equation for the number density function (in the phase space) of the light particles the collision integral does not preserve mass. The conservation of mass is nevertheless guaranteed by the presence in the limit kinetic equation of a point measure component of the form $\l(t,x)\de_0(dv)dx$ proportional to the Dirac delta distribution in $0$, where $\l(t,x)$ is a sort of "Lagrange multiplier" of the conservation of mass, which describes the particles stopping in the random medium.

The second remark is that, because the collision integral does not satisfy the conservation of mass and the flow (\ref{moto}) is irreversible, although the dynamics of the particle system under exam looks quite simple, the asymptotics  behaviour of its number density profile is more tricky to analyse than for systems like the ones described, e.g. in \cite{Gallavotti} or \cite{DesvillettesRicci}. In particular, it is not possible to reduce the analysis to the asymptotics of the Markovian component of the number density (for a review on the terminology, see \cite{Ricci}).

%%%%%%%%%%%%%%%%%%%%%%%%%%%%%%%

\bigskip

%\section*{Acknowledgments}

  %V.Ricci acknowledges the support of Gruppo Nazionale per la Fisica Matematica-Istituto Nazionale di Alta Matematica (GNFM -\newline INdAM) and of  the University of Palermo, Fondi Finalizzati alla Ricerca di Ateneo FFR2021 Valeria Ricci, FFR2023, FFR2018-2020. A.J.Soares thanks the support of the Portuguese FCT (Funda\c c\~ao para a Ci\^encia e a Tecnologia) Projects UIDB/00013/2020 and UIDP/00013/2020 of CMAT-UM (Centro de Matem\'atica da Universidade do Minho).
  
  \section*{Declarations}
  
  Conflict of interest: the authors declare that  there are no conflicts of interest related to this article.

\section*{Data Availability Statement}

Data sharing is not applicable to this article as no datasets were generated or analyzed during the current study.

\

\

\end{document}